\newcommand{\arxivversion}[1]{#1}
\newcommand{\shortversion}[1]{}
 \newcommand{\myqed}{\qed}
 \newcommand{\myqed}{}
\newtheorem{theorem}{Theorem}
\newtheorem{definition}{Definition}
\newtheorem{corollary}{Corollary}
\begin{document}

\newcommand{\mydiv}{\mbox{\rm div}}
\newcommand{\mymod}{\mbox{\scriptsize \rm \ mod \ }}
\newcommand{\mymin}{\mbox{\rm min}}
\newcommand{\mymax}{\mbox{\rm max}}
\newcommand{\calFR}{{\cal FR}}
\newcommand{\calF}{{\cal F}}
\newcommand{\calC}{{\cal C}}
\newcommand{\calA}{{\cal A}}
\newcommand{\calL}{{\cal L}}
\newcommand{\dom}{{dom}}
\newcommand{\real}{\ensuremath{\mathbb{R}}}
\newcommand{\nat}{\ensuremath{\mathbb{N}}}
\newcommand{\ZZ}{\ensuremath{\mathbb{Z}}}
\newcommand{\upw}{UPW}
\newcommand{\unw}{UNW}
\newcommand{\wpw}{WPW}
\newcommand{\wnw}{WNW}

\newcommand{\mf}{{\mathscr F}}
\newcommand{\mc}{\mathcal C}
\newcommand{\ra}{\rightarrow}

\newcommand{\set}{\mathcal}
\newcommand{\myset}[1]{\ensuremath{\mathcal #1}}
\newcommand{\myOmit}[1]{}
\newcommand{\tighter}{\mbox{$\preceq$}}
\newcommand{\stighter}{\mbox{$\prec$}}
\newcommand{\incomparable}{\mbox{$\bowtie$}}
\newcommand{\equivalent}{\mbox{$\equiv$}}

\newcommand{\reg}{\mbox{$RE$}}
\newcommand{\mreg}{\mbox{$MR$}}
\newcommand{\hprs}{\mbox{$HPRS$}}
\newcommand{\reglo}{\mbox{$LO$}}
\newcommand{\ps}{\mbox{$PS$}}
\newcommand{\tc}{\mbox{$ST$}}
\newcommand{\among}{\mbox{$AD$}}
\newcommand{\lse}{\mbox{$LG$}}
\newcommand{\lser}{\mbox{$LG_R$}}
\newcommand{\cs}{\mbox{$CS$}}
\newcommand{\csdc}{\mbox{$CS_{DC}$}}
\newcommand{\fl}{\mbox{\sc $FB$}}
\newcommand{\flS}{\mbox{\sc $FB_S$}}
\newcommand{\amongS}{\mbox{$AD_S$}}

\newcommand{\gsc}{\mbox{\sc Gsc}}
\newcommand{\gcc}{\mbox{\sc Gcc}}
\newcommand{\GCC}{\mbox{\sc Gcc}}
\newcommand{\AllDifferent}{\mbox{\sc AllDifferent}}

\newcommand{\nina}[1]{{#1}}

\newcommand{\SLIDE}{\mbox{\sc Slide}}
\newcommand{\SLIDINGSUM}{\mbox{\sc SlidingSum}}
\newcommand{\REGULAR}{\mbox{\sc Regular}}
\newcommand{\TABLE}{\mbox{\sc Table}}
\newcommand{\CIRCREGULAR}{\mbox{$\mbox{\sc Regular}_{\odot}$}}
\newcommand{\WRAPREGULAR}{\mbox{$\mbox{\sc Regular}^{*}$}}
\newcommand{\STRETCH}{\mbox{\sc Stretch}}
\newcommand{\INCSEQ}{\mbox{\sc IncreasingSeq}}
\newcommand{\INC}{\mbox{\sc Increasing}}
\newcommand{\lseX}{\mbox{\sc Lex}}
\newcommand{\NFA}{\mbox{\sc NFA}}
\newcommand{\DFA}{\mbox{\sc DFA}}

\newcommand{\PRECEDENCE}{\mbox{\sc Precedence}}
\newcommand{\lseXVAR}{\mbox{\sc LexLeader}}
\newcommand{\lseXGENSET}{\mbox{\sc SetGenLexLeader}}
\newcommand{\lseXSETVAR}{\mbox{\sc SetLexLeader}}
\newcommand{\lseXMSETVAR}{\mbox{\sc MSetLexLeader}}
\newcommand{\lseXSETVAL}{\mbox{\sc SetValLexLeader}}
\newcommand{\lseXVAL}{\mbox{\sc ValLexLeader}}
\newcommand{\lseXVALVAR}{\mbox{\sc GenLexLeader}}
\newcommand{\VALVARLEX}{\mbox{\sc ValVarLexLeader}}
\newcommand{\NVALUES}{\mbox{\sc NValues}}
\newcommand{\USES}{\mbox{\sc Uses}}
\newcommand{\COMMONG}{\mbox{\sc Common}}
\newcommand{\CARDPATH}{\mbox{\sc CardPath}}
\newcommand{\RANGE}{\mbox{\sc Range}}
\newcommand{\ROOTS}{\mbox{\sc Roots}}
\newcommand{\AMONG}{\mbox{\sc Among}}
\newcommand{\ATMOST}{\mbox{\sc AtMost}}
\newcommand{\ATLEAST}{\mbox{\sc AtLeast}}
\newcommand{\ATMOSTSEQ}{\mbox{\sc AtMostSeq}}
\newcommand{\ATLEASTSEQ}{\mbox{\sc AtLeastSeq}}
\newcommand{\AMONGSEQ}{\mbox{\sc AmongSeq}}
\newcommand{\SEQUENCE}{\mbox{\sc Sequence}}
\newcommand{\GENSEQUENCE}{\mbox{\sc Gen-Sequence}}
\newcommand{\SEQ}{\mbox{\sc Seq}}
\newcommand{\myelement}{\mbox{\sc Element}}
\newcommand{\LEX} {\mbox{\sc Lex}}
\newcommand{\REPEAT} {\mbox{\sc Repeat}}
\newcommand{\REPEATONE} {\mbox{\sc RepeatOne}}
\newcommand{\STRETCHREPEAT} {\mbox{\sc StretchRepeat}}
\newcommand{\STRETCHONEREPEAT} {\mbox{\sc StretchOneRepeat}}
\newcommand{\STRETCHONEREPEATONE} {\mbox{\sc StretchOneRepeatOne}}
\newcommand{\SETSIGLEX} {\mbox{\sc SetSigLex}}
\newcommand{\SETPREC} {\mbox{\sc SetPrecedence}}

\newcommand{\SOFTATMOSTSEC} {\mbox{\sc SoftAtMostSequence}}
\newcommand{\ATMOSTSEC} {\mbox{\sc AtMostSequence}}
\newcommand{\SOFTATMOST} {\mbox{\sc SoftAtMost}}
\newcommand{\SOFTSEQ} {\mbox{\sc SoftSequence}}
\newcommand{\SOFTAMONG} {\mbox{\sc SoftAmong}}
\newcommand{\SEQCYC}{\mbox{$\mbox{\sc CyclicSequence}$}}

\newcommand{\ATMOSTSEQCYC}{\mbox{$\mbox{\sc AtMostSeq}_{\odot}$}}
\newcommand{\ignore}[1]{}

\newcommand{\mytrue}{\textsc{true}\xspace}
\newcommand{\myfalse}{\textsc{false}\xspace}

{\makeatletter
 \gdef\xxxmark{%
   \expandafter\ifx\csname @mpargs\endcsname\relax 
     \expandafter\ifx\csname @captype\endcsname\relax 
       \marginpar{xxx}
     \else
       xxx 
     \fi
   \else
     xxx 
   \fi}
 \gdef\xxx{\@ifnextchar[\xxx@lab\xxx@nolab}
 \long\gdef\xxx@lab[#1]#2{{\bf [\xxxmark #2 ---{\sc #1}]}}
 \long\gdef\xxx@nolab#1{{\bf [\xxxmark #1]}}
}

\newcommand{\ms}{\mathcal S}
\newcommand{\ma}{\mathcal A}
\newcommand{\mv}{\mathcal V}
\newcommand{\rev}{\text{rev}}
\newcommand{\others}{\text{\it Others}}

\newcommand{\vote}[3]{\mbox{$#1 \! \succ \! #2 \! \succ \! #3$}\xspace}
\newcommand{\vvote}[4]{\mbox{$#1 \! \succ \! #2 \! \succ \! #3 \! \succ \! #4$}\xspace}

\newcommand{\new}[1]{{#1}}
\newcommand{\myvec}[1]{\vec{#1}}

\renewcommand{\restriction}{\mathord{\upharpoonright}}
\newcommand{\votingRule}{\textsc{Maj}2\xspace}

\pagestyle{plain}

\title{Possible and Necessary Winner Problem in Social Polls}

\arxivversion{
  \author[1,2]{Serge Gaspers\thanks{Email: sergeg@cse.unsw.edu.au}}
  \author[3]{Victor Naroditskiy\thanks{Email: vn@ecs.soton.ac.uk}}
  \author[1,2]{Nina Narodytska\thanks{Email: nina.narodytska@nicta.com.au}}
  \author[1,2]{Toby Walsh\thanks{Email: toby.walsh@nicta.com.au}}
  \affil[1]{NICTA, Sydney, Australia}
  \affil[2]{The University of New South Wales, Sydney, Australia}
  \affil[3]{School of Electronics and Computer Science, University of Southampton, UK}
  \date{}
}

\shortversion{
  \numberofauthors{4}
  \author{
  \alignauthor
  Serge Gaspers\\
	\affaddr{UNSW and NICTA}\\
	\affaddr{Sydney, Australia}\\
	\email{sergeg@cse.unsw.edu.au}
  \alignauthor
  Victor Naroditskiy\\
	\affaddr{University of Southampton}\\
	\affaddr{Southampton, UK}\\
	\email{vn@ecs.soton.ac.uk}
  \alignauthor
  Nina Narodytska\\
	\affaddr{NICTA and UNSW}\\
	\affaddr{Sydney, Australia}\\
	\email{nina.narodytska@nicta.com.au}
  \and
  \alignauthor
  Toby Walsh\\
	\affaddr{NICTA and UNSW}\\
	\affaddr{Sydney, Australia}\\
	\email{toby.walsh@nicta.com.au}
  }
}

\maketitle

\begin{abstract}
Social networks are increasingly being used to
conduct polls. We introduce a simple model of
such social polling. We suppose agents vote sequentially, but
the order in which agents choose to vote is not necessarily
fixed. We also suppose that an agent's vote is influenced by
the votes of their friends who have already voted.
Despite its simplicity, this model provides useful
insights into a number of areas including
social polling, sequential voting, and
manipulation. We prove that the number of candidates
and the network structure affect the computational complexity
of computing which candidate necessarily or possibly
can win in such a social poll.
For social networks with bounded treewidth and
a bounded number of candidates, we provide polynomial algorithms
for both problems. In other cases, we prove that
computing which candidates necessarily or possibly
win are computationally intractable.
\end{abstract}

\shortversion{
  \category{I.2.11}{Artificial Intelligence}{Distributed Artificial Intelligence}[Multiagent systems]
  
  \terms{Algorithms, Economics, Theory}
  
  \keywords{Social polls, social choice, possible winner, necessary winner}
}

\sloppy
\section{Introduction}

A fundamental issue with voting is that agents may vote strategically.
Results like those of Gibbard-Satterthwaite demonstrate that, under modest
assumptions, strategic voting is likely to be possible \cite{gs1,gs2}.
However,
such results do not tell us how to vote strategically. A large body
of work in computational social choice considers how we compute
such strategic votes \cite{fpaimag10,fhhcacm10}.
Typically such work starts from some
strong assumptions. For example, it is typically assumed that the
manipulators have complete information about the other votes.
The argument given for this assumption is that computing
a strategic vote will only be computationally harder with incomplete
information. In practice, of course, we often only have
partial or probabilistic information \cite{waaai2007,cwxaaai11}.
It is also typically assumed that manipulators will vote in any
way that achieves their ends. However, in practice, agents may
be concerned about peer pressure and may not want to deviate
too far from either their true vote or that of their
peers \cite{oeaamas12}. Bikhhardani et al.~\cite{BikhchandaniHW92} identified
several factors that limit strategic voting by an individual
agent such as sanctions on deviation, and conformity
of preferences. A third strong assumption
is either that all voting happens simultaneously or that
the manipulators get to vote after all the other agents.
Again, in practice, this is often not the case.

These issues all come to a head in {\em social polling}. This
is a context in which voting meets social networks. Startups like Quipol
and GoPollGo use social networks to track
public opinions. Such polls are often not anonymous.
We can see how our friends have voted and this may
influence how we vote. By their very nature,
such polls also happen over time. The order in which agents vote can
therefore be important. The structure of social networks
is also important. For example, a distinctive feature
of social networks is the small world property which
allows members of these communities to share
information in a highly efficient and low cost manner.
A rumor started in the Twitter network reaches about 90\% of the network
in just 8 rounds of communication \cite{DoerrFF12}. In a
similar way, one member of a social network can quickly create and
publicize a poll among a large group of agents starting from his
friends. The massive size of social networks, like Facebook,
Twitter and Google+, gives statistically significant polls.

To study social polling, we set up a general model
that captures several important features of voting
within a social network. First, our model
uses the structure of the social network. How
an agent votes depends on how their friends vote. Second,
our model supposes agents vote sequentially and the
order in which they vote is not under their control.
For example, when you vote may depend on when one
of your friends chooses to invite you to vote.
Third, our model supposes that agents are influenced
by their friends. In fact, an agent's vote is some function
of their true preferences and of
the preferences revealed by the votes of their friends that 
have already voted. We can obtain different
instances of our model by choosing different functions.

To study this model, we consider a particular
instance that captures some of the features of
a Doodle poll. More precisely, each agent has a set of $k$
preferred candidates and is indifferent about other candidates.
Among these $k$ preferred candidates, one candidate is her top choice.
If a particular candidate among her $k$ preferred candidates
has a majority amongst her friends that have already voted,
then she mimics
their choice. Otherwise, she votes for her top choice.
Note that
any computational lower bounds derived for this particular instance also hold
for the general model.

Even though this instance of the model is simple and lacks some of the
subtleties of social influence in practice, it nevertheless
provides some valuable insights. For example, we
prove that it is computationally hard
to determine if a given candidate has necessarily won
a social poll, irrespective of how the remaining
agents vote. We also show that this intractability holds even if
the social graph has a simple structure
like  a disjoint union of paths. Of course, in practice
social influence is much more complex and
subtle. In addition, social graphs often
have much a richer structure than simple paths.
Finally, agents in general do not know
precisely how all the other agents will vote. 
However, all these issues will
only increase the computational complexity of 
reasoning about a social poll.

We focus here on computing the possible and necessary winners
of the social poll. A candidate
is a possible winner if there exists a voting order
such that this candidate is a plurality winner over the
cast votes. Similarly, a candidate is a necessary winner if he
is a plurality winner over the cast votes for each voting order.
The possible and necessary winner problems are interesting
in their own right. In addition, they provide insight into
several related and interesting problems. For example,
they are related to the control problem in which the
chair chooses an order of participation for the agents
that favors a particular outcome. In particular,
the chair can control the result of the election in
this way if and only if their desired candidate is
a possible winner.

\myOmit{
In this work, we define possible and necessary winner problems under
sequential voting in social networks and investigate their computational complexity.
We identify some basic factors that influence the computational
complexity of these problems: the number of the candidates and the structure of the social
network. We show that in most cases, it is computationally difficult
to determine the possible and necessary winners. However, if the structure of the
underlying social network has bounded treewidth and the number of candidates
is bounded then the problem becomes polynomially solvable.
}

\section{Problem Statement}

We consider a scenario where each agent votes for exactly one candidate.
We are given a social network graph $G=(V,E)$ whose $n$ vertices are the agents $x_1,\dots,x_n$,
a set $\mc=\{c_1,\dots,c_m\}$ of $m$ candidates, a distinguished candidate $c^*\in \mc$,
and a choice function $h$, which for every agent $x_i$, every subset $S\subseteq N_G(x_i)$ of its
neighbors in $G$, and every vote of an agent in $S$, assigns the candidate that $x_i$ votes for.
Each agent casts exactly one vote according to the following model.
For a given \emph{voting order} $\pi = (x_{\pi(1)}, \dots, x_{\pi(n)})$,
let $S_i$ denote the set $\{ x_{j} : \pi^{-1}(j) < \pi^{-1}(i) \} \cap N_G(x_i)$,
i.e., the neighbors of $x_i$ that vote before $x_i$.
Each agent $x_i$ votes for the candidate that the choice function
$h$ assigns for the given candidate $x_i$, the subset $S_i$ and the votes of the agents in $S_i$.
The \emph{score} of a candidate $c$ is the number of agents that vote $c$ in the voting order $\pi$.
A candidate $c\in \mc$ is a 
(co-)winner in the voting order $\pi$ if no other candidate has 
higher score than $c$.
A candidate is a \emph{possible winner} if there exists a voting order where $c$ is a winner.
A candidate is a \emph{necessary winner} if for every voting order, $c$ is a winner.

\medskip
\noindent\textbf{Refined model.}
We introduce a particular instance of the choice function $h$.
This is defined via two \emph{preference functions} $p_1:V\rightarrow \mc$ and $P:V\rightarrow 2^{\mc}$.\
Each agent $x\in V$ has a set $P(x)\subseteq \mc$ of $k$ preferred candidates, where $k>1$ is a constant.
Among the preferred candidates, one candidate $p_1(x) \in P(x)$ is the top preferred candidate.
Let $x$ be an agent and $S$ be the subset of $N_G(x)$ that voted before $x$.
If there exists a candidate $c\in P(x)$ such that more than half of the agents
from $S$ voted for $c$, then $x$ votes for $c$. Otherwise, $x$ votes for $p_1(x)$.
%
Note that all complexity lower bounds for this refined model also hold in the general model.

The \emph{unweighted possible winner ($\upw$)} problem is
to determine for an instance as described above whether $c^*$ is a possible winner.
Similarly, the \emph{unweighted necessary winner ($\unw$)} problem is to
determine whether $c^*$ is a necessary winner.
The \emph{weighted possible/necessary winner ($\wpw/\wnw$)} problems are defined similarly,
except that integer weights are associated with agents and the score of a candidate is
the sum of the weights of the agents that voted him.

\section{Overview of Results}

We will show that
the computational complexity of the possible and necessary winner
problem depends on the structure of the underlying social graph
and the number of candidates.
In particular, we prove that if the underlying social graph
has bounded treewidth and the number of candidates is
bounded then the unweighted possible and necessary winner
problems can be solved in polynomial time (Corollary~\ref{cor:unweightedTW}).
The degree of the polynomial bounding the running time of this algorithm is a function
of the number of candidates and the treewidth of the social network graph.
We give evidence that this cannot be avoided by showing that the problem is not finite-state.
For arbitrary social network graphs and a bounded number of candidates,
the weighted possible winner problem is NP-complete (Theorem~\ref{wpw_bpwg_alt_constant}), while
the weighted necessary winner problem is polynomial (Corollary~\ref{cor:NecessaryWeightedTW}).
If we relax the restriction on the
treewidth, all problems become computationally intractable (Theorems~\ref{upw_bpg_alt_constant}--~\ref{unw_bpg_alt_constant}).
Finally, we investigate these problems under the assumptions that
the number of candidates is unbounded and
the social graph is a disjoint union of paths. We show
that the unweighted possible winner problem is hard
even if the length of each path is at most one (Theorem~\ref{upw_cng_alt_unbounded}). By contrast, the
necessary winner problem is polynomial (Corollary~\ref{cor:NecessaryWeightedTW})
under the assumption that the number of candidates is unbounded
and the underlying social graph
has bounded treewidth.
Our results also demonstrate that the possible winner
problem is inherently computationally harder than
the necessary winner problem. This is not surprising as
the necessary winner problem requires much stronger
conditions to be satisfied for a candidate to be a necessary winner.
Table~\ref{t:results} summarizes our results.

\begin{table*}
\center{
{
\begin{tabular}{|c|c|c|c|c|c|c|}
  \hline
   $\#$ cands  & graph class &  $\upw$  &    $\unw$ & $\wpw$& $\wnw$\\
   \hline
   \hline
  $O(1)$& bounded treewidth & P (Cor~\ref{cor:unweightedTW})  &  P (Cor~\ref{cor:unweightedTW})  &  NPC (Thm~\ref{wpw_bpwg_alt_constant}) &  P (Cor~\ref{cor:NecessaryWeightedTW})\\
   &  & &   & (paths of length $\le 2$) & \\
   $O(1)$& bipartite & NPC (Thm~\ref{upw_bpg_alt_constant})  &   co-NPC (Thm~\ref{unw_bpg_alt_constant}) &  NPC  (Thm~\ref{upw_bpg_alt_constant})  & co-NPC ( Thm~\ref{unw_bpg_alt_constant})\\
   \hline
    \hline
  $O(n)$& bounded treewidth  & NPC (Thm~\ref{upw_cng_alt_unbounded})  &  P (Cor~\ref{cor:NecessaryWeightedTW})  &  NPC
  (Thm~\ref{upw_cng_alt_unbounded})  & P (Cor~\ref{cor:NecessaryWeightedTW}) \\
  &  &  (paths of length $1$)  &    &  (paths of length $1$)  & \\

  \hline
\end{tabular}  \caption{Overview of results\label{t:results}}}
}
\end{table*}

\section{Related work}

The possible and necessary winner problems were introduced
in the context of simultaneous voting to capture
uncertainty 
in preferences.
For example, due to incomplete preference elicitation,
we may have only have partial orders over the candidates
as the preferences of the voters.
Konczak and Lang considered two questions over a profile
with partial orders~\cite{klijcai2005}. Let $c^*$ be a distinguished candidate.
The first question is whether there is an extension of the partial orders to linear orders
such that the candidate $c^*$ wins. The second question is whether the candidate $c^*$ wins
for every extension of the partial orders to linear orders.
Our definitions of possible and necessary winner problems are inspired by
these two questions, but with uncertainty introduced by the voting order.

Xia and Conitzer~\cite{xcaaai08} identified connections between
possible and necessary winner problems and a number of
important problems in computational social choice,
including manipulation and preference elicitation problems.
The computational complexity  of the possible and necessary winner problems
under many commonly used voting rules has been extensively
investigated~\cite{xcaaai08,waaai2007}. If the number of candidates
is bounded and votes are unweighted then these problems
can be solved in polynomial time for any voting rule
that itself is polynomial~\cite{waaai2007,csljac2m007,prvwijcai2007}.
If the number of candidates is unbounded and votes are weighted,
these problems become computationally hard~\cite{waaai2007,csljac2m007}. 
Xia and Conitzer also investigated the setting where the number of candidates is unbounded and votes are unweighted~\cite{xcaaai08}.
They showed that the computational complexity in this
case depends on the voting rule.
Their results also demonstrate that the possible winner problem is computationally
harder than the necessary winner problem for many rules, including a class of
positional scoring rules, Maximin and Bucklin voting rules. We observe a similar relation
between the computational complexity of  possible and necessary winner problems in
social polls.

Perhaps closest to this work is Alon {et al.}~\cite{alon12}. However, the problems studied
there are rather different. In their model, agents have private preferences and vote strategically.
An agent experiences disutility if the winning candidate differs from his vote.
The authors derive an equilibrium voting strategy as a function of previously cast votes.
As soon as a candidate accumulates a (small) lead, all future votes are cast in his favor
independent of private preferences. This ``herding" behavior is compared across simultaneous and sequential
voting equilibria. Simultaneous and sequential voting mechanisms have also been compared based
on how well preferences are aggregated in equilibria of corresponding games~\cite{Dekel00,Battaglini07}.
Preference aggregation over multiple issues in the presence of influence has also been studied
by Maudet {et al.}~\cite{MaudetPVR12}.

\section{Preliminaries}

%
%

\medskip
\noindent\textbf{Graph theory.}
%
%
We refer to \cite{Diestel10} for basic notions of graphs and digraphs.
The path on $k$ vertices is denoted $P_k$.
For our algorithmic results, a central notion is the treewidth
of graphs \cite{RobertsonS84}.
A \emph{tree decomposition} of a graph $G=(V,E)$ is a pair
$(\{B_i : i\in I\},T)$
where the sets $B_i \subseteq V$, $i\in I$, are called \emph{bags} and $T$ is a tree with elements
of $I$ as nodes
such that:
\begin{enumerate}
  \item for each edge $uv\in E$, there is an $i\in I$ such that $\{u,v\}
\subseteq B_i$, and
\item for each vertex $v\in V$, $T[\{i\in I: v\in B_i\}]$ is a 
tree with at least one node.%
\end{enumerate}
The \emph{width} of a tree decomposition is $\max_{i \in I} |B_i|-1$.
The \emph{treewidth} of $G$
is the minimum width taken over all tree decompositions
of $G$. 

\medskip
\noindent\textbf{Social network graph.}
Let $V$ be a set of voters.
A binary friendship relation $\calFR$ on $V$
is a collection of unordered pairs $\calF \subseteq V \times V$.
We consider a relation $\calFR$ that is symmetric, reflexive and complete.
Given  $\calFR$ we build the social network graph $G$ as follows.
For each agent $v_i \in V$ we introduce a vertex $v_{i}$.
We connect two vertices $v_{i}$ and $v_{j}$  iff $(v_i,v_j) \in \calFR$.

\medskip
\noindent\textbf{NP-complete problems.}
Our hardness reductions rely on the NP-completeness of several classic problems~\cite{GareyJ79}.
A {\sc  partition} instance contains a set of integers
$A = \{k_0,\ldots,k_{n-1}\}$ such that $\sum_{j=0}^{n-1}k_j=2K$.
The problem is to determine
whether there exists a
partition of these numbers into two sets which sum to $K$.
A {\sc 3-hitting set} instance contains two sets:
$Q=\{q_0,\ldots,q_{n-1}\}$ and $S=\{S_1,\ldots,S_t\}$, where $t\geq
2$ and for all $j\leq t$, $|S_j|=3$ and $S_j\subseteq Q$.
The problem is to determine whether there exists a
set $H$, a so-called \emph{hitting set}, of size at most $k$ such that
$H \cap S_i \neq \emptyset$, $i=1,\ldots,t$.
Consider a set of Boolean variables $X = \{x_1,\ldots,x_n\}$.
A \emph{literal} is either a Boolean variable $x_i$ or its negation $\bar{x}_i$.
A \emph{clause} is a disjunction of literals.
A Boolean formula in \emph{conjunctive normal form} (CNF) is
a conjunction of $m$ clauses, $\{c_1, \ldots,c_m\}$.
A {\sc $(3^{\leq},3^{\leq})$-SAT} instance is a CNF formula such that every
clause has at most 3 literals and each variable occurs at most 3 times.
The problem is to check whether there exists an instantiation of Boolean
variables $X$ to make the $(3^{\leq},3^{\leq})$-SAT  instance evaluate to
{\sc true}, which is an NP-complete problem~\cite{Tovey84}.

\section{Tractable cases}

In this section we describe algorithms for the polynomial time solvable cases
in Table \ref{t:results}.
To simplify the description, we use the concept of nice tree decompositions.
A tree decomposition $(\{B_i : i\in I\},T)$ is \emph{nice} if each node $i$ of $T$ is of one of four types:
\begin{description} \itemsep=0pt
  \item[Leaf node:] $i$ is a leaf in $T$ and $|B_i|=1$;
  \item[Insert node:] $i$ has one child $j$, $|B_i|=|B_j|+1$, and $B_j \subset B_i$;
  \item[Forget node:] $i$ has one child $j$, $|B_i|=|B_j|-1$, and $B_i \subset B_j$;
  \item[Join node:] $i$ has two children $j$ and $k$ and $B_i=B_j=B_k$.
\end{description}
\noindent
An algorithm by Kloks \cite{Kloks94} converts any tree decomposition into a nice tree decomposition of the same width in linear time. 

A \emph{score function} of $\mc$ is a function $\#:\mc\rightarrow \mathbb{N}$.
A score function $\#$ can be \emph{achieved} by an instance if there is a voting order where $c$ is voted by $\#(c)$ agents, for every candidate $c\in \mc$.

\begin{theorem}\label{thm:unweightedTW}
There is a polynomial time algorithm, which, given a social network graph $G=(V,E)$ with treewidth $t=O(1)$,
a set $\mc$ of $m=O(1)$ candidates, and preference functions $P$ and $p_1$,
computes all possible score functions that can be achieved by this instance.
\end{theorem}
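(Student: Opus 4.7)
The plan is a dynamic-programming algorithm on a nice tree decomposition of $G$, which by Kloks' algorithm can be computed in linear time since the width $t=O(1)$. For each node $i$ we maintain a set of \emph{signatures}; each signature records, for every vertex $v\in B_i$, whether $v$ has already voted and, if so, its vote $c_v$ and its rank among the voted bag vertices; for every $v\in B_i$ and every candidate $c\in\mc$, a count $n_{v,c}$ equal to the number of forgotten neighbors of $v$ (those in $V_i\setminus B_i$) that voted for $c$ before $v$'s voting time; and a partial score function $\#_i:\mc\rightarrow\nat$ tallying the votes cast so far in $V_i$. Since every count $n_{v,c}$ is bounded by $n$, each bag has at most $(t+1)!\cdot(m+1)^{t+1}\cdot(n+1)^{m(t+1)+m}$ signatures, which is polynomial in $n$ whenever $m$ and $t$ are $O(1)$.

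The transitions are as follows. At a leaf $\{v\}$ the unique starting signature has $v$ unvoted and all counts and scores zero. An insert of a new vertex $v$ extends every child signature by adding $v$ as unvoted with $n_{v,\cdot}=0$; the tree-decomposition property guarantees that $v$ has no forgotten neighbor at this point. Between any two transitions we also allow \emph{voting events}: pick an unvoted bag vertex $v$, give it the next highest rank, let the refined-model rule compute its vote from $n_{v,\cdot}$ together with the votes and ranks of already-voted bag neighbors (the candidate is uniquely determined since at most one candidate can exceed $50\%$ of the prior-voter support), and increment $\#_i(c_v)$. At a forget node for $v$ we require $v$ to be voted and, for each unvoted bag neighbor $w$ of $v$, we set $n_{w,c_v}\mathrel{{+}{=}}1$, reflecting that $v$ moves to the forgotten side committed to vote before $w$'s future time; then $v$ is dropped from the signature. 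At a join with $B_i=B_j=B_k$ we pair child signatures whose bag-information agrees exactly, set $n_{v,c}^{(i)}=n_{v,c}^{(j)}+n_{v,c}^{(k)}$, and set $\#_i=\#_j+\#_k-\#_{\mathrm{bag}}$ to avoid double-counting the voted bag vertices. The achievable score functions are read off from the root signatures whose bag is empty.

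The main obstacle will be verifying correctness at the join and forget transitions. For the join, I must show that every consistent global voting process on $V_i$ arises from a unique pair of consistent processes on $V_j$ and $V_k$ with matching bag signatures, and conversely that any such pair can be combined; this uses the standard fact that every edge of $G$ lies entirely inside a single bag, hence on one side of the join, so that each forgotten contribution to $n_{v,c}$ is charged to exactly one child. For the forget, I must check that $v$'s vote is accounted for in each neighbor's decision exactly once: since voting events always assign the next highest rank, a voted bag neighbor $w$ whose rank exceeds $v$'s already incorporated $v$ at $w$'s own voting event through the bag information, while the explicit $n_{w,c_v}$ update covers the remaining case of an unvoted $w$ that is destined to vote after $v$. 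Combined with polynomially many signatures per node and polynomial work per transition, this yields the claimed polynomial-time algorithm.
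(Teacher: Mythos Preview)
Your dynamic programming framework on a nice tree decomposition is the right general approach, and several bookkeeping ideas (per-vertex neighbor counts, a partial score function, polynomially many signatures per bag) mirror the paper's proof. However, there is a genuine completeness gap in the ``voting event'' mechanism.

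The difficulty is that you force each vertex $v$ to cast its vote \emph{while it is in the current bag}, and you compute that vote deterministically from the information available at that instant ($n_{v,\cdot}$ from already-forgotten neighbors plus already-voted bag neighbors). This constrains the simulated global order: any neighbor of $v$ that is introduced only in a later bag is necessarily treated as voting \emph{after} $v$. Not every achievable score function is realised by an order satisfying this constraint. Concretely, take the path $v_1\!-\!v_2\!-\!v_3$ with $p_1(v_1)=p_1(v_2)=a$, $p_1(v_3)=b$, $P(v_i)=\{a,b\}$. The order $v_3\prec v_2\prec v_1$ is the \emph{only} order yielding $\#(a)=0,\ \#(b)=3$. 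With the nice tree decomposition processed from the $v_1$ side (leaf $\{v_1\}$, insert $v_2$, forget $v_1$, insert $v_3$, \dots), $v_1$ must vote before $v_3$ is inserted. Hence either $v_2$ votes in bag $\{v_1,v_2\}$ with no influence (and gets $a$), or $v_2$ waits until $\{v_2,v_3\}$, by which time $v_1$ is forgotten and therefore committed to having voted before $v_2$, contributing an $a$ to $v_2$'s decision. Either way $v_2$ votes $a$, and your algorithm never outputs the score function $(0,3)$. (The mirror rooting happens to work for this example, but the algorithm fixes one decomposition up front; and for longer paths no single rooting simulates all orders.)

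The paper sidesteps this by \emph{guessing} each bag vertex's vote when it appears and \emph{deferring the correctness check to the forget node}, at which point all of the vertex's neighbors have necessarily been seen. The orientation of friendship edges is stored explicitly as a DAG on the bag, and per-vertex influence and anterior counts are accumulated over the vertex's entire lifetime; only at forget is the guess tested against the refined-model rule. Your scheme can be repaired in the same spirit: record a tentative vote for each bag vertex (so neighbors can use it immediately), keep accumulating $n_{v,\cdot}$ until $v$ is forgotten, and only then verify that the tentative vote matches the rule on the full neighbor information. This also fixes a related soundness worry at your join: when $v$ is ``voted'' in both children's signatures, each child computed $v$'s vote from only its own side's forgotten neighbors, and agreement of the two local votes does not guarantee consistency with the combined influence.
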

\begin{proof}
By Bodlaender's algorithm \cite{Bodlaender96}, compute a minimum width tree decomposition of $G$ in linear time. Let $t$ denote the width of this tree decomposition. Using Kloks' algorithm \cite{Kloks94}, convert it into a nice tree decomposition of width $t$ with $O(n)$ nodes in linear time. Select an arbitrary leaf of this tree decomposition, add a neighboring empty bag $r$ and root the tree decomposition at $r$. Denote the resulting tree decomposition by $(\{B_i : i\in I\},T)$.

In the description of our algorithm, we denote by $G_{\downarrow i}$ the subgraph induced by the subset of all vertices occurring in $B_i$ and bags associated to descendants of $i$ in $T$.

First, observe that the vote of a given agent does not depend on the ordering of the agents that voted before her, but solely on which subset of her friends were ordered before her. Therefore, instead of storing partial orderings of agents that have already been processed, we may merely store acyclic orientations of subgraphs of the friendship graph, where an edge oriented from $x$ to $y$ represents that $x$ votes before $y$. Any linear ordering extending a given acyclic orientation of the friendship graph will produce the same voting outcome.

Our dynamic programming algorithm will process bottom-up from the leafs to the root of the tree decomposition.
The computation at an internal node $i$ looks up the already computed results stored at its children.
Note that we cannot afford to remember all oriented paths in all relevant orientations of $G_{\downarrow i}$ that were computed at descendants of node $i$. All we need to remember at node $i$ is whether for two vertices $x,y\in B_i$, our
computations rely on orientations of subgraphs of $G_{\downarrow i}$ that contain a directed path from $x$ to $y$.
If so, we remember that there is a path from $x$ to $y$ by adding an arc $(x,y)$ to a directed acyclic graph (DAG)
with vertex set $B_i$ to the local information stored at this node.
Additionally, for every edge $xy$ in $G[B_i]$, we also need to decide (resp., go over all possible decisions), whether $x$ votes before $y$, or $y$ votes before $x$. This is again stored by orienting the edge $xy$ accordingly.
Therefore, at a node $i$, we process all DAGs on the vertex set $B_i$ whose underlying undirected graphs are supergraphs of $G[B_i]$.
For such a DAG $D$, we also process all votes of the vertices in $B_i$ (a voting function $v: B_i \rightarrow \mc$), all potential scores of candidates resulting from the votes of vertices in $G_{\downarrow i}$ (a score function $\#: \mc \rightarrow \{0, \dots, n\}$).
In addition, in order to do a sanity check to determine whether an agent $x\in B_i$ has indeed cast her vote according to our model after we have seen the votes of all her friends, we store for each candidate in $P(x)\setminus p_1(x)$ how many friends voted that candidate (an \emph{influence function} $s$ mapping an agent $x\in B_i$ and a candidate $c\in P(x)\setminus p_1(x)$ to a natural number in $\{0,\dots,n\}$) and how many of her friends voted before her (an \emph{anterior function} $a: B_i \rightarrow \{0,\dots,n\}$).

A voting function $v: X \rightarrow \mc$ on a subset of agents $X\subseteq V$ is \emph{legal} if $v(x)\in P(x)$, for every agent $x\in X$.
A voting function $v:X\rightarrow \mc$ \emph{extends} a voting function $v':X'\rightarrow \mc$ if $X'\subseteq X$ and $v(x)=v'(x)$ for every $x\in X'$.
An anterior function $a:X\rightarrow \{0,\dots,n\}$ is compatible with an influence function $s: X\times \mc \rightarrow \{0, \dots, n\}$ if for every $x\in X$, we have that $\sum_{c \in P(x)\setminus p_1(x)} s(x,c) \le a(x)$.
A voting function $v$ is \emph{compatible} with two compatible anterior and influence functions $a:X\rightarrow \{0,\dots,n\}$ and $s: X\times \mc \rightarrow \{0, \dots, n\}$ if for every vertex $x\in X$ with $N_G(x) \subseteq X$, we have that $v(x)=c$ if there exists a $c\in P(x)\setminus p_1(x)$ such that $s(x,c)>a(x)/2$, and $v(x)=p_1(x)$ otherwise.
A voting function $v:X\rightarrow \mc$ is \emph{compatible} with a score function $\#:\mc\rightarrow \{0,\dots,n\}$ if for every candidate $c\in \mc$, $|\{x\in X: v(x)=c\}|=\#(c)$.
The function $s$ is \emph{compatible} with a DAG $D$ with vertex set $X$ and a voting function $v$ if for every agent $x\in X$ and every candidate $c\in P(x)\setminus p_1(x)$, we have that $s(x,c)            = |\{y\in N^-_D(x) : v(y) =  c\}|$.
The function $a$ is \emph{compatible} with       $D$                     if for every agent $x\in X$, $a(x) = |N^-_D(x)|$.
We say that $v$, $D$, $\#$, $s$, $a$ are \emph{mutually compatible} if $a$ is compatible with $s$, $v$ is compatible with $a$ and $s$, $v$ is compatible with $\#$, $s$ is compatible with $D$ and $v$, and $a$ is compatible with $D$.

The algorithm computes a table entry for every relevant set of parameters $(i,v,D,\#,s,a)$, which is a Boolean and is \mytrue\ if and only if there is an acyclic orientation $D_{\downarrow i}$ of $G_{\downarrow i}$ such that:
\begin{itemize}
\item if there are two vertices $x,y$ in $B_i$ and a directed path from $x$ to $y$ in $D_{\downarrow i}$, then the arc $(x,y)$ is in $D$,
\item the voting function $v: B_i\rightarrow \mc$ can be extended to a legal voting function $v':V(G_{\downarrow i}) \rightarrow \mc$, and
\item $v'$, $D_{\downarrow i}$, $\#$, $s$, $a$ are mutually compatible.
\end{itemize}

Now that we have identified the relevant information stored at each node of the tree decomposition, the actual dynamic programming recurrences are fairly straightforward. We only need to ensure that the computations rely on already-computed table entries that are compatible with the entry that is being computed. For simplicity, we disregard issues arising from out-of-bounds table parameters and undefined values by assuming those entries to be \myfalse.

\medskip
\noindent\textbf{Leaf.} Suppose $i$ is a leaf with $B_i = \{x\}$. We set $T(i,v,D,\#,s,a)$ to \mytrue\ if $D=(\{x\},\emptyset)$, $v:\{x\}\rightarrow \mc$ is legal, and $v$, $D$, $\#$, $s$, $a$ are mutually compatible, and to \myfalse\ otherwise.

\medskip
\noindent\textbf{Insert node.} Suppose $i$ is an insert node in $T$ with child $j$. Let $x$ be the unique agent in $B_i \setminus B_j$. We set $T(i,v,D,\#,s,a)$ to \myfalse\ if $v$ is not legal or $s(x,c)$ is not the number of $y\in N^-_D(x)$ such that $v(y)=c$, for every $c\in P(x)\setminus p_1(x)$, or $a(x) \ne |N^-_D(x)|$. Otherwise, set $T(i,v,D,\#,s,a) := T(j,v',D',\#',s',a')$ where:
\begin{itemize}
 \item $v' = v\restriction_{B_j}$,
 \item $D' = D-x$,
 \item $\#'$ is obtained from $\#$ by decrementing $\#(v(x))$ by one,
 \item $s'$ is obtained from $s\restriction_{B_j \times \mc}$ by decrementing $s(y,v(x))$ by one for every $y\in N^+_D(x)$ such that $v(x)\in P(y)\setminus p_1(y)$, and
 \item $a'$ is obtained from $a\restriction_{B_j}$ by decrementing $a(y)$ by one for every $y\in N^+_D(x)$.
\end{itemize}
Here, $f\restriction_{A}$ denotes the restriction of a function $f:B\rightarrow C$ to a subdomain $A\subseteq B$.

\medskip
\noindent\textbf{Forget node.} Suppose $i$ is a forget node in $T$ with child $j$. Let $x$ be the unique agent in $B_j \setminus B_i$. Since $x$ occurs only in $B_j$ and its descendants in $T$, all neighbors of $x$ are in $V(G_{\downarrow i})$. Therefore, we now do a sanity check and disregard all situations where $x$ does not vote according to our model. We set $T(i,v,D,\#,s,a)$ to \myfalse\ if $v$ is not legal, or $v(x)=p_1(x)$ but there exists a candidate $c\in P(x)\setminus p_1(x)$ with $s(x,c)>a(x)/2$, or $v(x)\ne p_1(x)$ but $s(x,c)\le a(x)/2$ for every candidate $c\in P(x)\setminus p_1(x)$. Otherwise it is obtained by computing a disjunction of all $T(j,v',D',\#',s',a')$ such that:
\begin{itemize}
 \item $v'$ extends $v$,
 \item $D=D'-x$,
 \item $\#=\#'$,
 \item $s=s'$,
 \item $a=a'$,
 \item if $v'(x)=p_1(x)$ then $s(x,c)\le a(x)/2$ for every $c\in P(x)\setminus p_1(x)$, and
 \item if $v'(x)\ne p_1(x)$ then $s(x,v'(x)) > a(x)/2$.
\end{itemize}

\medskip
\noindent\textbf{Join node.} Suppose $i$ is a join node in $T$ with children $j$ and $j'$. Since all agents that occur in both $G_{\downarrow j}$ and $G_{\downarrow j'}$, also occur in $B_i$, we can easily correct any overcounting resulting from summing values for the subproblems at $j$ and $j'$ when computing the functions $\#$, $s$, and $a$ at node $i$.
We set $T(i,v,D,\#,s,a)$ to be a disjunction over all
$T(j,v',D',\#',s',a') \wedge T(j',v'',D'',\#'',s'',a'')$ with:
\begin{itemize}
 \item $v=v'=v''$,
 \item $D=D'=D''$,
 \item $\#(c)=\#'(c)+\#''(c)-|\{x\in B_i : v(x)=c\}|$ for each $c\in \mc$,
 \item $s(x,c) = s'(x,c) + s''(x,c) - |\{ y\in N^-_D(x) : v(y) =  c \}|$ for each $x\in B_i$ and $c\in P(x)\setminus p_1(x)$, and
 \item $a(x) = a'(x) + a''(x) - |N^-_D(x)|$ for each $x\in B_i$.
\end{itemize}

After all table entries have been computed, we inspect the entries at the root node $r$ of $T$. Since $B_r$ is empty, all table entries associated with node $r$ have an empty voting function $v$, a vertex-less DAG $D$, and empty anterior and influence functions $a$ and $s$. The only relevant information still contained in these entries are the score functions $\#$ that can be achieved by the instance. The algorithm returns these score functions.

Let us now upper bound the number of table entries.
The number of nodes of $T$ is $O(n)$.
For each node $i$ of $T$, $|B_i|\le t$.
Thus, the number of legal voting functions $v: B_i \rightarrow \mc$ is at most $k^t$.
Denoting by $q_t$ the number of labeled directed acyclic graphs on $t$ nodes, $q_t$ can be expressed by the recurrence relation
\begin{align*}
 q_t = \sum_{k=1}^t (-1)^{k-1} \binom{t}{k} 2^{k (t-k)} q_{t-k}
\end{align*}
with $q_1=1$ \cite{HararyP73,Robinson73}.
Asymptotically, $q_t = O(t! 2^{\binom{t}{2}} 1.488^{-t})$ (see, e.g., \cite{Liskovets08}).
The number of distinct score functions is bounded by $n^{|\mc|}$.
The number of influence functions is bounded by $n^{t(k-1)}$.
The number of anterior functions is bounded by $n^t$.
Finally, the number of table entries is $O(n \cdot k^t \cdot t! 2^{\binom{t}{2}} \cdot n^{|\mc|} \cdot n^{t(k-1)} \cdot n^t)$.

Each table entry can be computed in time $O(n^{|\mc| + t k})$.
Indeed, the computations at the leaf and the insert nodes can be done in time $O(1)$.
A table entry computed at a forget node $i$ ranges over all legal extensions $v'$ of $v$ and all digraphs $D'$ such that $D=D'-x$.
Since $|V(D')| \le t$, there are $O(3^t)$ such digraphs: each vertex from $D$ is either not a neighbor or an in-neighbor or an out-neighbor of $x$ in $D'$.
The number of legal extensions of $v$ to the domain $B_i \cup \{x\}$ is $k$.
Thus, table entries at a forget node can be computed in time $O(3^t)$ which is in $O(n^{|\mc| + t k})$ if $n>1$.
Computations at join nodes range over all possibilities to sum $\#'(c)$ and $\#''(c)$ to $\#(c)+|\{x\in B_i : v(x)=c\}|$ for each $c\in \mc$,
all possibilities to sum $s'(x,c)$ and $s''(x,c)$ to $s(x,c)+|\{ y\in N^-_D(x) : v(y) =  c \}|$ for each $x\in B_i$ and each $c\in P(x)\setminus p_1(x)$, and all possibilities to sum $a'(x)$ and $a''(x)$ to $a(x)+|N^-_D(x)|$ for each $x\in B_i$. Thus, the computation of a table entry at a join node looks up $O(n^{|\mc|+t k})$ table values.
All in all, our algorithm has running time $O(n^{1+2|\mc|+2t k} \cdot k^t \cdot t! \cdot 2^{\binom{t}{2}}) = O(n^{1+2|\mc|+2t k} \cdot 2^{t\log k + t\log t + t^2})$.
\myqed \end{proof}

\noindent
After executing this algorithm, one can easily identify whether a candidate $c$ is a possible or necessary
winner by inspecting the score functions that can be achieved by the instance.

\begin{corollary}\label{cor:unweightedTW}
For any class of instances where the treewidth of the social network and the number of candidates are bounded by a fixed constant, the unweighted possible and necessary winner problems can be solved in polynomial time.
\end{corollary}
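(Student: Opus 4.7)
The plan is to derive this corollary directly from Theorem~\ref{thm:unweightedTW}. First I would invoke the theorem to compute, in polynomial time, the set $\mathcal{S}$ of all score functions $\#:\mc\rightarrow \mathbb{N}$ achievable by the instance. Since each score function $\#$ is bounded componentwise by $n$ and there are $|\mc|=O(1)$ candidates, the set $\mathcal{S}$ has size at most $n^{|\mc|}$, which is polynomial under the corollary's assumptions, and each element can be written down in polynomial space.

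Next I would translate possible and necessary winner queries into predicates over $\mathcal{S}$. By definition, $c^*$ is a possible winner exactly when there exists a voting order producing a score function in which no candidate beats $c^*$; since $\mathcal{S}$ is precisely the set of score functions arising from some voting order, this reduces to testing whether some $\# \in \mathcal{S}$ satisfies $\#(c^*) \ge \#(c)$ for all $c \in \mc$. Similarly, $c^*$ is a necessary winner exactly when every $\# \in \mathcal{S}$ satisfies this inequality. Both checks iterate over $|\mathcal{S}|\cdot |\mc|$ pairs, which is polynomial.

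Finally I would observe that adding this post-processing step to the algorithm of Theorem~\ref{thm:unweightedTW} yields an overall polynomial time procedure deciding \upw\ and \unw\ whenever the treewidth $t$ and the number of candidates $|\mc|$ are bounded by constants. There is no real obstacle here: the content of the corollary lies entirely in Theorem~\ref{thm:unweightedTW}; the corollary merely records the standard observation that enumerating all achievable score vectors is enough to settle both winner questions.
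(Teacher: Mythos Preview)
Your proposal is correct and matches the paper's approach exactly: the paper states the corollary immediately after Theorem~\ref{thm:unweightedTW} with only the remark that ``one can easily identify whether a candidate $c$ is a possible or necessary winner by inspecting the score functions that can be achieved by the instance.'' You have simply spelled out that inspection in detail, including the polynomial bound $|\mathcal{S}|\le n^{|\mc|}$ and the obvious $\max/\min$ checks over $\mathcal{S}$, which is precisely what the paper leaves implicit.
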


\noindent
Theorem~\ref{wpw_bpwg_alt_constant} shows that the weighted version of
the possible winner problem is NP-hard under the same restrictions.
The necessary winner problem can be reformulated as $m-1$ subproblems of the following type:
is there a voting order where candidate $d$ achieves a higher score than candidate $c$?
If some other candidate can achieve a higher score than our distinguished candidate $c^*$, then $c^*$ is not a necessary winner.
Testing whether a candidate $d$ can achieve a higher score than a candidate $c$ can be done by a slight variation of our previous algorithm, even for the weighted version of the problem and for an unbounded number of candidates.

\begin{corollary}\label{cor:NecessaryWeightedTW}
 The weighted necessary winner problem can be solved in polynomial time for social network graphs with treewidth $O(1)$.
\end{corollary}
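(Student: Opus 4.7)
The plan is to reformulate the weighted necessary winner question as a family of $m-1$ optimization subproblems, each solvable by a light modification of the tree-decomposition dynamic program of Theorem~\ref{thm:unweightedTW}. Observe that $c^*$ fails to be a weighted necessary winner if and only if some candidate $d \in \mc \setminus \{c^*\}$ admits a voting order under which its weighted score strictly exceeds that of $c^*$. So for each such $d$ it suffices to compute the maximum, over all voting orders, of the signed weighted difference $\#(d) - \#(c^*)$, and to declare $c^*$ not a necessary winner as soon as this maximum is positive for some $d$.

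To compute this maximum for a fixed pair $(c^*, d)$, I would reuse the same tree decomposition and the same state components $(v, D, s, a)$ as in Theorem~\ref{thm:unweightedTW}, but replace the score function $\#$ in the DP table by a single integer, namely the best achievable value of $\sum_{x} \phi(x, v'(x))$, where $\phi(x, d) = w_x$, $\phi(x, c^*) = -w_x$, and $\phi(x, c) = 0$ for every other candidate $c$. The recurrence is then straightforward: at a leaf store $\phi(x, v(x))$; at an insert node add $\phi(x, v(x))$ to the child's value; at a forget node take the maximum over legal extensions while enforcing the original model-consistency checks on the forgotten vertex; at a join node add the two children's values and subtract the boundary contribution $\sum_{x \in B_i} \phi(x, v(x))$, which would otherwise be counted twice. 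All legality and compatibility conditions involving $v$, $D$, $s$, $a$ carry over unchanged from Theorem~\ref{thm:unweightedTW}.

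The reason this runs in polynomial time even for unbounded $m$ and arbitrary integer weights is that the only component of the original state whose size depended on $m$, namely the score function $\#$ with its $n^{|\mc|}$ possibilities, has been replaced by a single stored integer that is never enumerated over. All remaining state components retain the bounds used in Theorem~\ref{thm:unweightedTW}: at most $k^{t}$ legal voting functions on $B_i$ (since $v(x) \in P(x)$ and $|P(x)| = k$ is constant), $q_t$ acyclic orientations $D$, $(n+1)^{t(k-1)}$ influence functions $s$, and $(n+1)^{t}$ anterior functions $a$. Since $k$ and $t$ are constants, the number of table entries and the work per entry are polynomial in $n$, and running the algorithm for each of the $m-1$ candidates $d$ yields an overall polynomial-time algorithm.

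The step I expect to need the most care is the treatment of the join node under weighted voting: one must not enumerate how the maximum could split between the two children, because the stored integer may be exponentially large when weights are given in binary. The key observation is that the two subtrees share only the vertices of $B_i$, so the maximum of a sum over two disjoint collections of vertices decomposes as the sum of the two independently computed maxima minus the fixed shared contribution $\sum_{x \in B_i} \phi(x, v(x))$, avoiding any enumeration over weight values and keeping the computation at each join node polynomial.
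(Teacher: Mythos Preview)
Your proposal is correct and follows essentially the same approach as the paper: reformulate necessary winning as $m-1$ pairwise tests, drop the score-function parameter $\#$ from the table, and store instead the maximum achievable weighted difference, so that the $n^{|\mc|}$ factor disappears and the algorithm is polynomial even for unbounded $m$. Your treatment of the join node is in fact more explicit than the paper's (which simply calls the modifications ``straightforward''), and your observation that the maximum decomposes additively once $(v,D,s,a)$ are fixed on $B_i$ is exactly the reason no enumeration over weight splits is needed.
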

\begin{proof}
 We need a polynomial time test of whether a candidate $d$ achieves a higher score than a candidate $c$.
 We modify the algorithm in the proof of Theorem~\ref{thm:unweightedTW} as follows.
 Remove the function $\#$ from the table parameters. Instead, each table entry is an integer, representing the maximum possible value of
 the score of candidate $d$ minus the score of candidate $c$ in this subinstance. This change implies some other changes in the computation of the table entries (a disjunction of table entries becomes a maximum, setting a table entry to \myfalse\ becomes setting its value to $- \infty$, etc.), all of which are straightforward.
 In the end, there is a voting order where $d$ achieves a higher score than $c$ if the unique table entry at the root of the tree decomposition is positive.
 Since all factors of the form $n^{|\mc|}$ in the running time bound of Theorem~\ref{thm:unweightedTW} are due to the table parameter $\#$, this variant is polynomial even for an unbounded number of candidates.
\myqed \end{proof}

Although the algorithm from Theorem \ref{thm:unweightedTW} is polynomial whenever $|\mc|$ and $t$ are upper bounded by a fixed constant, its running time seems prohibitive even for relatively small values of $|\mc|$ and $t$. This is largely due to the degree of the polynomial bounding the running time depending on $|\mc|$ and $t$. Therefore, a natural question is whether the problems can be solved in time $f(|\mc|,t) \cdot n^c$, where $c$ is a constant independent of $|\mc|$ and $t$, and $f$ is a function independent of $n$.
Formulated in the terms of \emph{multivariate complexity} \cite{DowneyF99,FellowsGR11,FlumG06,Niedermeier06}: are the problems fixed-parameter tractable (FPT) parameterized by $|\mc|+t$? We conjecture that they are $W[1]$-hard, and give supporting evidence in terms of finite-state properties of graphs \cite{vanBevernFGR12,BodlaenderFW92,FellowsL89}.

\begin{definition}
 An $l$-\emph{boundaried graph} is a triple $(V,E,B)$ with $(V,E)$ a
 simple graph, and $B \subseteq V$ an ordered subset of $l\ge 0$ vertices.
 Vertices in $B$ are called \emph{boundary vertices}.
\end{definition}

\begin{definition}
 The operation $\oplus$ maps two $l$-boundaried graphs $G$ and $H$, $l\ge 0$, to a graph $G \oplus H$, by taking the disjoint union of $G$ and $H$, then identifying corresponding boundary vertices, i.e., for $i = 1..l$, identifying the $i$th boundary vertex of $G$ with the $i$th boundary vertex of $H$, and removing multiple edges.
\end{definition}

If $F$ is an arbitrary family of (ordinary) graphs, we define the following canonical equivalence relation $\sim_{F,l}$ induced by $F$ on the set of $l$-boundaried graphs.

\begin{definition}
 $G_1 \sim_{F,l} G_2$ if and only if for all $l$-boundaried graphs $H$, $G_1 \oplus H \in F \Leftrightarrow G_2 \oplus H \in F$.
\end{definition}
The graph family $F$ is of finite index if $\sim_{F,l}$ has a finite number of equivalence classes for all $l\ge 0$.

Slightly abusing notation, we use the previously defined terms for instances of our problems instead of graphs.

\begin{theorem}\label{thm:notfinitestate}
 The class of unweighted instances where the social network graph has treewidth at most $1$, the number of candidates is at most $2$, and $c^*$ is a possible (respectively, necessary) winner is not of finite index.
\end{theorem}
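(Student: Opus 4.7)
The plan is to work at boundary size $l = 0$, for which the $\oplus$ operation degenerates to disjoint union; this will make score differences additive across $G$ and $H$, which is exactly what is needed to force infinitely many equivalence classes. I will treat the possible-winner family and the necessary-winner family simultaneously, since the construction will make the voting outcome unique (so ``possible'' and ``necessary'' coincide on every instance we build).

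For each integer $i \ge 1$, I define a $0$-boundaried instance $G_i$ as follows: its graph has $i$ isolated vertices (so treewidth $0 \le 1$ and it is a forest), the candidate set is $\mc = \{c^*, d\}$ (so $m = 2$), and every vertex $x$ has $P(x) = \{c^*, d\}$ with $p_1(x) = c^*$ (which is legal since $k = 2 > 1$). Because each vertex has no neighbors, the refined voting rule forces every vertex to cast its vote for $p_1(x) = c^*$ under every voting order. Hence in $G_i$ the unique outcome gives $c^*$ score $i$ and $d$ score $0$.

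The main step is to exhibit, for every pair $i < j$, a $0$-boundaried instance $H$ distinguishing $G_i$ from $G_j$. I take $H$ to consist of $i+1$ isolated vertices with the same candidate set, with $P(x) = \{c^*,d\}$ and $p_1(x) = d$. By the same argument, in $H$ every vertex votes $d$. Since $G_i \oplus H$ is a disjoint union, the scores simply add: $c^*$ has score $i$ and $d$ has score $i+1$, so $c^*$ strictly loses in the unique outcome and is therefore not a possible winner (in particular, not a necessary winner). In $G_j \oplus H$, however, $c^*$ has score $j \ge i+1$ and $d$ has score $i+1$, so $c^*$ is a co-winner, and hence both a possible and a necessary winner. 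This shows $G_i \not\sim_{F,0} G_j$ for all $i \ne j$ and for $F$ being either the possible-winner family or the necessary-winner family, so $\sim_{F,0}$ has infinitely many equivalence classes and $F$ is not of finite index.

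There is no real obstacle here; the design is driven by the observation that any finite-index invariant at the boundary could only store bounded information, whereas the quantity that actually controls whether $c^*$ wins in $G \oplus H$ under disjoint union is the unbounded integer surplus $\mathrm{score}(c^*) - \mathrm{score}(d)$ attainable inside $G$. The isolated-vertex gadgets realise every positive (in $G_i$) and every negative (in $H$) surplus, which makes the distinguishing pairs trivial to verify. I would finish with a short remark explaining why this non-finite-state phenomenon is consistent with Theorem~\ref{thm:unweightedTW}: the dynamic-programming table there records the entire score function $\#$, an object of size $\Theta(\log n)$ bits per candidate rather than a bounded amount of information, which is precisely why the degree of the polynomial in its running time depends on $|\mc|$ and on $t$.
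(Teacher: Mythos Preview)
Your argument is correct and follows essentially the same approach as the paper: work at boundary size $l=0$, build an infinite family of instances with a unique voting outcome, and use the additivity of scores under disjoint union to distinguish any two of them by an appropriate test instance. The only difference is cosmetic --- the paper uses paths $P_i$ (with all voters having $p_1=c^*$, respectively $p_1=a$) rather than edgeless graphs, but since in both constructions every agent ends up voting her top choice regardless of the order, the analysis is identical.
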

\begin{proof}
 Let $F$ be this class of instances.
 We consider the equivalence relation $\sim_{F,0}$ and show that it has an infinite number of equivalence classes.
 For every positive integer $i$, define the $0$-boundaried instance $L_i$ whose social network graph is the path $P_i$ and every voter $x$ on this path has $P(x)=\{c^*,a\}$ and $p_1(x)=c^*$.
 For every positive integer $i$, define the $0$-boundaried instance $R_i$ whose social network graph is the path $P_i$ and every voter $x$ on this path has $P(x)=\{c^*,a\}$ and $p_1(x)=a$.
 If $i>j$, then $L_i \not\sim_{F,0} L_j$ since $c^*$ is a winner in $L_i \oplus R_i$ for every ordering of the voters, but $c^*$ is not a winner in $L_j \oplus R_i$ for any ordering of the voters.
 Thus, every $L_i$, $i\ge 0$, is in a different equivalence class of the relation $\sim_{F,0}$.
\end{proof}

Consequently, finite-state automata are not
amendable to give an FPT algorithm, even for the parameter treewidth when the number of candidates
is upper bounded by a constant. Intuitively, Theorem \ref{thm:notfinitestate} implies that the amount of information
that the usual kind of algorithms need to transmit when transitioning from one bag of the tree decomposition to the next
cannot be upper bounded by a function depending only on the width of the tree decomposition.
It could still be upper bounded by an FPT function though,
in which case the other standard algorithmic technique for bounded-treewidth instances, dynamic-programming, could still give an FPT algorithm. However, the following
theorem shows that the index cannot be upper bounded by an FPT function either.

\begin{theorem}\label{thm:notdp}
 For every integer $n$, the class of unweighted instances whose social network graph has $n$ vertices and treewidth at most $1$, the number of candidates is $k$, and $c^*$ is a possible (respectively, necessary) winner has index at least
 $\lfloor n/k \rfloor^{k-1}$.
\end{theorem}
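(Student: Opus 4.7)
The plan is to extend the construction from Theorem~\ref{thm:notfinitestate} to $k$ candidates. Setting $m = \lfloor n/k \rfloor$, I will exhibit $m^{k-1}$ pairwise inequivalent $0$-boundaried instances under the equivalence relation $\sim_{F,0}$, where $F$ is the family described in the theorem. For each vector $\vec{a} = (a_1, \dots, a_{k-1}) \in \{0, 1, \dots, m-1\}^{k-1}$, define the $0$-boundaried instance $L_{\vec{a}}$ as an independent set in which, for every $j \in \{1, \dots, k-1\}$, exactly $a_j$ voters have $p_1 = c_j$ and $P = \{c^*, c_j\}$. Since all voters are isolated, scores in $L_{\vec{a}}$ are deterministic: candidate $c_j$ receives $a_j$ votes and $c^*$ receives none. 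The social network is edgeless, so treewidth is at most $1$.

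To distinguish $L_{\vec{a}}$ from $L_{\vec{b}}$ for $\vec{a} \ne \vec{b}$, I construct a $0$-boundaried instance $H$ such that $L_{\vec{a}} \oplus H \in F$ but $L_{\vec{b}} \oplus H \notin F$. First, if $\sum_{j=1}^{k-1} a_j \ne \sum_{j=1}^{k-1} b_j$, let $H$ consist of $n - \sum_j a_j$ isolated $c^*$-voters. Then $|L_{\vec{a}} \oplus H| = n$ and $c^*$ scores at least $n - (k-1)(m-1) \ge m + k - 1 > \max_j a_j$, so $c^*$ wins and $L_{\vec{a}} \oplus H \in F$; meanwhile $|L_{\vec{b}} \oplus H| \ne n$, so $L_{\vec{b}} \oplus H \notin F$. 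Otherwise $\sum_j a_j = \sum_j b_j$, which forces $k \ge 3$ and the existence of a coordinate $j_0$ (after possibly swapping $\vec{a},\vec{b}$) with $a_{j_0} < b_{j_0}$. Pick any $j_1 \ne j_0$ in $\{1, \dots, k-1\}$ and let $H$ be a set of isolated voters comprising $h_0$ for $c^*$, $h_1$ for $c_{j_0}$, and $h_2 \in \{0, 1\}$ for $c_{j_1}$, where $h_2$ is chosen so that $n - \sum_j a_j - h_2 \equiv a_{j_0} \pmod 2$. Then set $h_0 = (n - \sum_j a_j - h_2 + a_{j_0})/2$ and $h_1 = (n - \sum_j a_j - h_2 - a_{j_0})/2$, so that $c^*$ and $c_{j_0}$ both score $h_0$ in $L_{\vec{a}} \oplus H$, while in $L_{\vec{b}} \oplus H$, candidate $c_{j_0}$ scores $b_{j_0} + h_1 > h_0$ and strictly beats $c^*$.

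Because all voters in both combined instances are isolated, scores are deterministic and possible and necessary winners coincide; the same witness $H$ therefore distinguishes $L_{\vec{a}}$ from $L_{\vec{b}}$ under both interpretations of $F$. The main obstacle is the bookkeeping of Case~2, where one must check $h_0, h_1 \ge 0$ and that $h_0$ dominates every other candidate's score in $L_{\vec{a}} \oplus H$, notably $h_0 \ge a_{j_1} + h_2$ and $h_0 \ge a_j$ for $j \notin \{0, j_0, j_1\}$. These inequalities boil down to $h_0 \ge m$, which via the closed form for $h_0$ reduces to an inequality of the form $n \ge km - k + 3$; this follows from $n \ge km$ together with $k \ge 3$ by elementary arithmetic.
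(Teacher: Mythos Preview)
Your argument is correct, but it takes a different and somewhat more laborious path than the paper's.

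Both you and the paper exploit that the voting outcome is deterministic (you use isolated voters, the paper uses disjoint paths where every voter on a path has the same $p_1$; either works). The substantive difference is in the distinguisher. The paper avoids your two-case split entirely: for a tuple $(i_1,\dots,i_{k-1})$ with $i_j\le \ell=\lfloor n/k\rfloor$, it takes as witness the complementary instance $R_{\ell-i_1,\dots,\ell-i_{k-1},\ell}$, consisting of paths of the indicated lengths whose voters prefer $a_1,\dots,a_{k-1},c^*$ respectively. In $L_{i_1,\dots,i_{k-1}}\oplus R$ every candidate scores exactly $\ell$, so $c^*$ is a (co-)winner; if $(i_1',\dots,i_{k-1}')\neq(i_1,\dots,i_{k-1})$ and, say, $i_1'>i_1$, then in $L_{i_1',\dots,i_{k-1}'}\oplus R$ candidate $a_1$ scores $i_1'+\ell-i_1>\ell$ and beats $c^*$. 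No parity correction and no case analysis are needed.

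What each approach buys: the paper's distinguisher is cleaner and avoids the bookkeeping of your Case~2. On the other hand, you are more scrupulous about the ``exactly $n$ vertices'' requirement in the definition of $F_n$: your Case~1 and the choice of $|H|=n-\sum_j a_j$ in Case~2 guarantee the combined instance has exactly $n$ vertices, whereas the paper's $L\oplus R$ has $k\ell$ vertices, which need not equal $n$ when $k\nmid n$ (a point the paper glosses over). Your use of the vertex-count criterion in Case~1 is a legitimate shortcut given the formal definition of $F_n$, though it does not engage with the winner condition itself; one could also dispense with Case~1 by padding all $L_{\vec a}$ to a common size with isolated $c^*$-voters and then running only the Case~2 argument.
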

\begin{proof}
 Let $F_n$ be this class of instances.
 We consider the equivalence relation $\sim_{F_n,0}$ and show that it has at least
 $\lfloor n/k \rfloor^{k-1}$ equivalence classes. Let $\ell:=\lfloor n/k \rfloor$.
 For positive integers $i_1, \dots, i_{k-1} \le \ell$, define the $0$-boundaried instance $L_{i_1,\dots,i_{k-1}}$ whose social network graph is a disjoint union of paths $P_{i_j}$, $j=1,\dots,k-1$, and every voter $x$ on the path $P_{i_j}$ has $P(x)=\{c^*,a_j\}$ and $p_1(x)=a_j$.
 For positive integers $i_1, \dots, i_{k}\le \ell$, define the $0$-boundaried instance $R_{i_1,\dots,i_{k}}$ whose social network graph is a disjoint union of paths $P_{i_j}$, $j=1,\dots,k$, and every voter $x$ on the path $P_{i_j}$ with $i_j<k$ has $P(x)=\{c^*,a_j\}$ and $p_1(x)=a_j$ and every voter $x$ on the path $P_{k}$ has $P(x)=\{c^*,a_1\}$ and $p_1(x)=c^*$.
 Now, if $(i_1,\dots,i_{k-1}) \neq (i'_1,\dots,i'_{k-1})$, then $L_{i_1,\dots,i_{k-1}} \not\sim_{F_n,0} L_{i'_1,\dots,i'_{k-1}}$. To see this, suppose, w.l.o.g., that $i_1<i'_1$. Then $c^*$ is a winner in $L_{i_1,\dots,i_{k-1}} \oplus R_{\ell -i_1,\dots,\ell-i_{k-1},\ell}$ for every ordering of the voters, but $c^*$ is not a winner in $L_{i'_1,\dots,i'_{k-1}} \oplus R_{\ell -i_1,\dots,\ell-i_{k-1},\ell}$ for any ordering of the voters.
 Thus, every $L_{i_1,\dots,i_{k-1}}$, $0\le i_j \le \ell$, is in a different equivalence class of the relation $\sim_{F_n,0}$.
\end{proof}
 
Thus, we have little hope that the running time of the algorithm from Theorem \ref{thm:unweightedTW}
can be improved significantly.

\section{Intractable cases}

We observe that an isolated agent that has no friends
always votes for her top preferred candidate.
To simplify notations, we call the score of a candidate that comes
from all isolated agents the \emph{basic} score.
Our intractability results hold even if each voter has two preferred
candidates. We denote the two preferred candidates
of a voter $(x,y)$, where $x$ is the top preferred candidate.

\begin{theorem}~\label{wpw_bpwg_alt_constant}
The weighted possible winner problem is NP-complete even
if the social network graph is a disjoint union of paths of length at most two,
the number of candidates is constant, and each agent has two preferred candidates.
\end{theorem}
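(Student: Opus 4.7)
The plan is to settle NP-membership by the usual guess-and-verify, since a voting order is a polynomial-size certificate and the induced ballots can be simulated in polynomial time. For hardness I would reduce from \textsc{Partition}: given positive integers $\{k_0,\dots,k_{n-1}\}$ with $\sum_j k_j = 2K$, build a \wpw\ instance with three candidates $a$, $b$, and the distinguished $c^*$, whose social graph is the disjoint union of $n$ edges and one isolated vertex.

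The construction goes as follows. For each $i$, introduce an edge $u_iv_i$, assign weight $k_i$ to both endpoints, and set $P(u_i)=P(v_i)=\{a,b\}$ with $p_1(u_i)=a$ and $p_1(v_i)=b$. Add one isolated voter $z$ of weight $2K$ with $p_1(z)=c^*$ (and any second preferred candidate). All agents then have exactly two preferred candidates, the candidate set has constant size $3$, and the graph is a disjoint union of paths of length at most two (in fact, at most one).

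The central gadget property is: on each edge $u_iv_i$, whichever endpoint is ordered first has no voted friend, so it casts its top preference; the second endpoint then sees a single voted friend whose ballot lies in its own preferred set, so that candidate trivially holds a strict majority among its voted friends and the second endpoint mimics it. Hence the pair contributes $2k_i$ to $a$ if $u_i$ precedes $v_i$ and $2k_i$ to $b$ otherwise. Since the $n$ edges are pairwise vertex-disjoint and $z$ is isolated, these binary choices can be realised independently by an appropriate global interleaving, so every subset $A\subseteq\{0,\dots,n-1\}$ corresponds to some achievable voting order.

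Letting $A$ be the indices with $u_i$ before $v_i$, the final scores are $2\sum_{i\in A}k_i$ for $a$, $2\sum_{i\notin A}k_i$ for $b$, and $2K$ for $c^*$, with the first two summing to $4K$. Thus $c^*$ is a (co-)winner precisely when both are at most $2K$, which forces $\sum_{i\in A}k_i=\sum_{i\notin A}k_i=K$, i.e.\ exactly when the \textsc{Partition} instance is a yes-instance. I expect the only subtle point to be the strict-majority check at the second endpoint of each edge (and confirming that the isolated agent $z$ contributes exactly $2K$ to $c^*$ regardless of its position in the order); everything else is routine bookkeeping.
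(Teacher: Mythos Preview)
Your reduction is correct and in fact yields a slightly stronger statement than the one in the theorem: your social network graph is a disjoint union of paths of length at most~one, not two. The key gadget behaviour you describe is right in the refined model: on an edge the second voter always has a single voted friend whose ballot lies in her preferred set, so the strict-majority clause fires and she mimics.

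The paper also reduces from \textsc{Partition} with three candidates, but the gadget is different. For each integer $k_j$ it builds a three-vertex path $3j$--$(3j{+}1)$--$(3j{+}2)$ with preferences $(c,b),(a,c),(b,c)$ and places all the weight $k_jB$ on the last vertex; the first two have weight~$1$. Depending on whether $3j$ or $3j{+}1$ votes first, the heavy agent $3j{+}2$ ends up voting $c$ or $b$, so the heavy weights are split between $b$ and $c$. A single isolated agent of weight $KB+2n$ supports the distinguished candidate~$a$, and one argues that $a$ co-wins iff the heavy weights split evenly. Your construction is more direct: you put the distinguished candidate on the isolated heavy voter and let each edge contribute its full weight $2k_i$ to either $a$ or $b$, so the balance condition drops out immediately without the auxiliary scaling factor~$B$ or the extra vertex per gadget. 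The paper's route has no advantage here; yours is simply cleaner and proves more (length~$\le 1$ rather than~$\le 2$).
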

\begin{proof}
\begin{figure}
\centering
\includegraphics[width=0.40\textwidth]{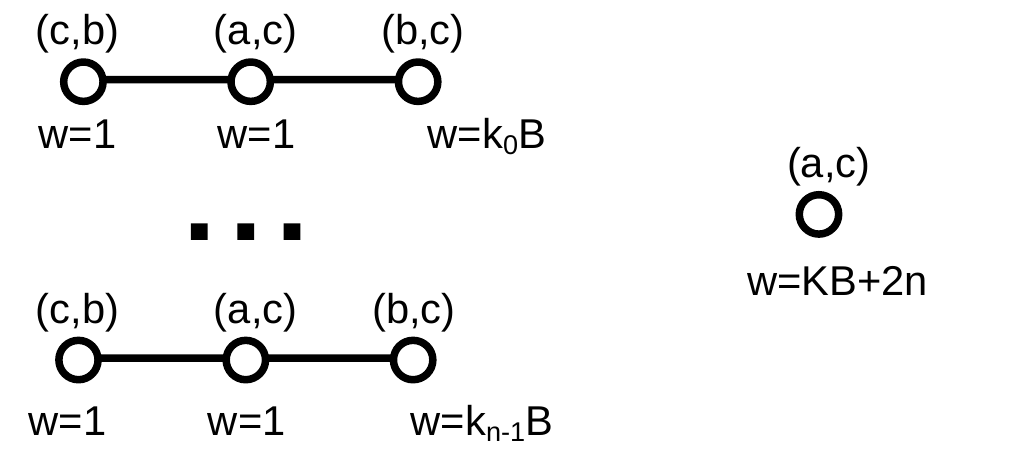}
\caption{\label{fig:proof_wpw_bpwg_alt_constant} The construction from Theorem~\ref{wpw_bpwg_alt_constant}}
\end{figure}

We reduce from an instance of the {\sc partition} problem
to $\wpw$ with three candidates $\{a,b,c\}$.

For each integer $k_j$, $j=0,\ldots,n-1$ we introduce 3 agents $3j,3j+1$
and $3j+2$, with preferences $(c,b), (a,c), (b,c)$, respectively.
The weights of the $(3j)$th agent and the $(3j+1)$th agent are one.
The weight of the $(3j+2)$th agent is $k_{j}B$, where $B$ is a large integer,
for instance $2n+1$.
Agents $3j,3j+1$
and $3j+2$ form the $j$th path of friends, $((3j,3j+1),(3j+1,3j+2))$,
that corresponds to the $k_{j}$th element.
We introduce an additional agent without friends, with
preferences $(a,c)$ and weight $KB+2n$. We ask whether $a$ is
a possible winner.
Figure~\ref{fig:proof_wpw_bpwg_alt_constant} illustrates the construction.

The basic score of $a$ is $KB+2n$.
The idea of the construction is
to make sure that the preferred candidate $a$
wins iff the weighted votes of $(3j+2)$th agents, $j=0,\ldots,n-1$,
are partitioned equally between candidates
$b$ and $c$.
Consider the $j$th path $((3j,3j+1), (3j+1,3j+2))$.
The $(3j+2)$th agent either votes for $b$ or for $c$
depending on the relative order of the candidates in this path.
As the weight
of the $(3j+2)$th agent is $k_{j}B$, either
$c$ or $b$ increases its score by $k_{j}B$.
Let $J$ be a set of paths such that the $(3j+2)$th agent
selects $b$, $j\in J$, and $J^c = \{0,\ldots,n-1\} \setminus J $
contains all paths such that the $(3j+2)$th agent selects $c$, $j\in J^c$.
Then the total weight that the candidate $b$ gets is
$\sum_{j \in J} k_{j}B = B\sum_{j \in J} k_{j}$.
If $\sum_{j \in J} k_{j} > K$ then the score of
$b$ is strictly greater than the maximum score of $a$.
Similarly, the total weight that the candidate $c$ gets
is $\sum_{j \in J^c} k_{j}B = B\sum_{j \in J^c} k_{j}$.
If $\sum_{j \in J^c} k_{j} > K$ then the score of
$c$ is strictly greater than the final score of $a$.
Therefore, the only way for $a$ to win is
if there exists a partition $\sum_{j \in J} k_{j} = K$
and  $\sum_{j \in J^c} k_{j} = K$.
In this case, $score(c) \leq KB+2n$,
$score(b) \leq KB + 2n$ and $score(a) \geq KB+2n$.
Hence, $a$ is a co-winner iff the \textsc{partition} instance
is a \textsc{Yes}-instance.

Suppose a partition $(J,J^c)$ of $A$ exists with $\sum_{j \in J} k_j = \sum_{j \in J^c} k_j$.
For the $j$th path, $j \in J$ we fix an order
$3j \prec 3j+1 \prec 3j+2$, where $ x\prec y$ means
$x$ votes before $y$.
For the $j$th path, $j \notin J$ we fix an order
$3j+1 \prec 3j \prec 3j+2$. This ensures that the weights of the $(3j+2)$th agents
in all paths are split equally between $b$ and $c$. Hence, $a$ is a co-winner.\myqed
\end{proof}

\begin{theorem}~\label{upw_bpg_alt_constant}
The unweighted possible winner problem is NP-complete even if
the number of candidates is constant,
the social network graph is bipartite, and each agent has
two preferred candidates.
\end{theorem}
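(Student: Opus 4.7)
The plan is to reduce from $(3^{\leq},3^{\leq})$-SAT, which is NP-complete by Tovey. Given a CNF formula $\varphi$ with variables $x_1,\dots,x_n$ and clauses $C_1,\dots,C_m$, where every variable occurs at most three times and every clause has at most three literals, I would build an unweighted $\upw$ instance with a constant number of candidates (say $\{a,b,c,d\}$, with $c^* = a$) whose social network graph is bipartite, so that $a$ is a possible winner iff $\varphi$ is satisfiable.

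I would place \emph{variable gadgets} on one side $L$ of the bipartition and \emph{clause gadgets} on the other side $R$, with all friendship edges crossing between sides. A variable gadget for $x_i$ contains a central voter $v_i \in L$ and a small constant number of auxiliary voters in $R$, together with three ``signal'' voters in $R$, one per occurrence of $x_i$ (positive or negated). The gadget is designed with preference pairs so that, in any legal voting order, the relative order of the auxiliary voters and $v_i$ forces $v_i$'s final vote to encode a truth value for $x_i$. Crucially, the preferences are chosen so that the variable gadget is \emph{balanced}: the total number of votes it contributes to each candidate is the same under both truth assignments, up to the behavior of the signal voters. A clause gadget for $C_j$ contains a central voter $y_j \in R$ with preferences $(b,a)$ whose friends in $L$ are exactly the three signal voters for the literals of $C_j$; preferences are set so that $y_j$ votes for $a$ iff at least one of its friends (corresponding to a satisfied literal) voted for $a$ before $y_j$. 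Finally, I would add a fixed number of isolated ``padding'' voters to calibrate the basic scores so that, taking into account the balanced contributions of the variable gadgets, $a$ wins iff every $y_j$ votes for $a$.

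Correctness proceeds in two directions. Forward, given a satisfying assignment of $\varphi$, I would construct a voting order that (i)~processes each variable gadget consistently with the assignment and (ii)~within each clause gadget, orders some satisfied-literal signal voter before $y_j$, forcing $y_j$ to vote $a$. Summing the (balanced) variable-gadget contributions, the clause-gadget contributions, and the basic score, $a$ emerges as a co-winner. Backward, suppose some voting order makes $a$ a winner. The balance property of the variable gadgets means that every clause voter $y_j$ must have voted $a$ (otherwise some other candidate would tie or beat $a$). One then reads off a truth assignment from the variable gadgets; the fact that every $y_j$ received an $a$-vote from at least one literal friend witnesses a satisfying assignment.

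The main obstacle is the \emph{bipartite, unweighted} combination: unlike Theorem~\ref{wpw_bpwg_alt_constant}, where integer weights let us encode arithmetic almost directly, here I must use multiplicities of unit-weight voters to express all constraints, and every connection between a variable and a clause must cross the bipartition. This forces the variable gadgets to do their internal ``assignment commitment'' through voters on the opposite side, and balance then needs to be achieved by duplicating voters with carefully paired preference pairs so that, whichever ordering is chosen, the net contribution to each of $a,b,c,d$ is invariant modulo the three signal bits. Making the gadgets small enough to keep the number of candidates constant while still enforcing both balance and faithful signal transmission is the delicate step; the $(3^{\leq},3^{\leq})$ degree bound on the formula is what keeps these gadget sizes uniformly bounded.
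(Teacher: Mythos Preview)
Your proposal has a genuine gap: it is a plan for a reduction, not a reduction. The entire difficulty is hidden in the variable gadget, which you require to (i)~commit to a truth value under \emph{every} voting order, (ii)~emit up to three correctly polarised signal bits (some for positive, some for negated occurrences) toward clause voters, and (iii)~remain score-balanced across $\{a,b,c,d\}$ regardless of which value it commits to---all with four candidates, two-element preference sets, and only bipartite connections. You acknowledge this as ``the delicate step'' but do not build it, and it is not clear it can be built: with a constant candidate alphabet, arranging that one gadget simultaneously outputs a signal and its negation while keeping global scores invariant is exactly where such reductions tend to break. There are smaller problems as well. Your bipartition is internally inconsistent (the signal voters are placed in $R$ when you describe the variable gadget, but then declared to be $y_j$'s ``friends in $L$''). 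And your clause voter $y_j$ with $p_1(y_j)=b$ does not vote $a$ ``iff at least one friend voted $a$ before her'': under the model she needs a \emph{strict majority} of her already-voted friends to have voted $a$. You can engineer this in the forward direction by ordering a single satisfied signal before $y_j$, but in the backward direction you also need that any $a$-winning order forces each variable gadget into one of its two intended states, and you give no argument for that.

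The paper avoids all of this by reducing from {\sc 3-Hitting Set} rather than SAT. Monotonicity removes the need for negation gadgetry. Each element $q_j$ is encoded by a short path whose internal order decides whether the element-agent $4j{+}1$ (preferences $(a,c)$) votes $a$ or $c$; choosing $a$ forces the two trailing $(b,c)$ agents to vote $b$, so the $b$-versus-$a$ score gap caps the number of selected elements at $k$. Each set $S_i$ is a long path of $D$ agents with preferences $(b,a)$ whose first agent is adjacent to the three element-agents of $S_i$; the whole chain cascades to $a$ precisely when some selected element hits $S_i$. The amplification factor $D$ replaces your ``balance'' requirement: set coverage dominates every other contribution, so $a$ co-wins iff all $t$ sets are hit by at most $k$ selected elements. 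Bipartiteness follows from a straightforward $2$-colouring of the paths. If you want to repair your approach, the cleanest fix is to switch the source problem to one without negation and to use amplification (long influence chains) instead of exact balance.
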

\begin{proof}
We reduce from an instance of the {\sc 3-hitting set} problem.
For each element $q_j$, $j=0,\ldots,n-1$ we introduce 4
agents $4j,4j+1,4j+2$ and $4j+3$,
with  preferences $(c,b), (a,c), (b,c)$ and $(b,c)$, respectively.
Agents $4j, 4j+1, 4j+2$ and $4j+3$ form a path of friends.
We say that agents $4j,4j+1,4j+2$
and $4j+3$ represent the $j$th path that corresponds
to the $q_{j}$th element. In particular,
we refer to the $(4j+1)$th agent as an \emph{element-agent}, as
her decision corresponds to a selection of
the $q_{j}$th element into a hitting set.
For each set $S_i = (q_h,q_s,q_r)$, $i=1,\ldots,t$ we introduce $D$
agents $\{(4n-1) + D(i-1) + p\}$, $p=1,\ldots,D$, with  preferences $(b,a)$.
The $((4n-1) + D(i-1) +1)$th agent is a friend of the $(4h+1)$th, $(4s+1)$th and $(4r+1)$th agents.
Moreover,  $(4n-1) + D(i-1) + p$, $p=1,\ldots,D$ form a path of friends that starts
at $(4n-1) + D(i-1) + 1$ and ends at $(4n-1) + D(i-1) + D$.
We  refer to these as \emph{ set-agents}.
Finally, we introduce $B - k -Dt$ isolated agents
with  preferences $(a,c)$ and
$B - 2k$ isolated agents
with  preferences $(b,c)$,
where $B$ and $D > t$ are large integers such as $n^9$ and $n^4$.
We ask whether $a$ is a possible winner.
Figure~\ref{fig:proof_upw_bpg_alt_constant} illustrates the construction.
The basic score of $a$ is $B-k-Dt$
and of $b$ is $B-2k$.
The idea of the construction is that for $a$
to win it needs at least $Dt-k$ votes.
The construction ensures that at most $k$ of the $(4j+1)$th element-agents,
$j=0,\ldots,n-1$, can vote for $a$, otherwise $b$ beats $a$.
This corresponds to a selection of $k$ elements
in the hitting set. The
$Dt$ set-agents must all vote for $a$, otherwise $a$ loses,
which is possible iff a set of element-agents that selected $a$
corresponds to a hitting set.

\begin{figure}
\centering
\includegraphics[width=0.5\textwidth]{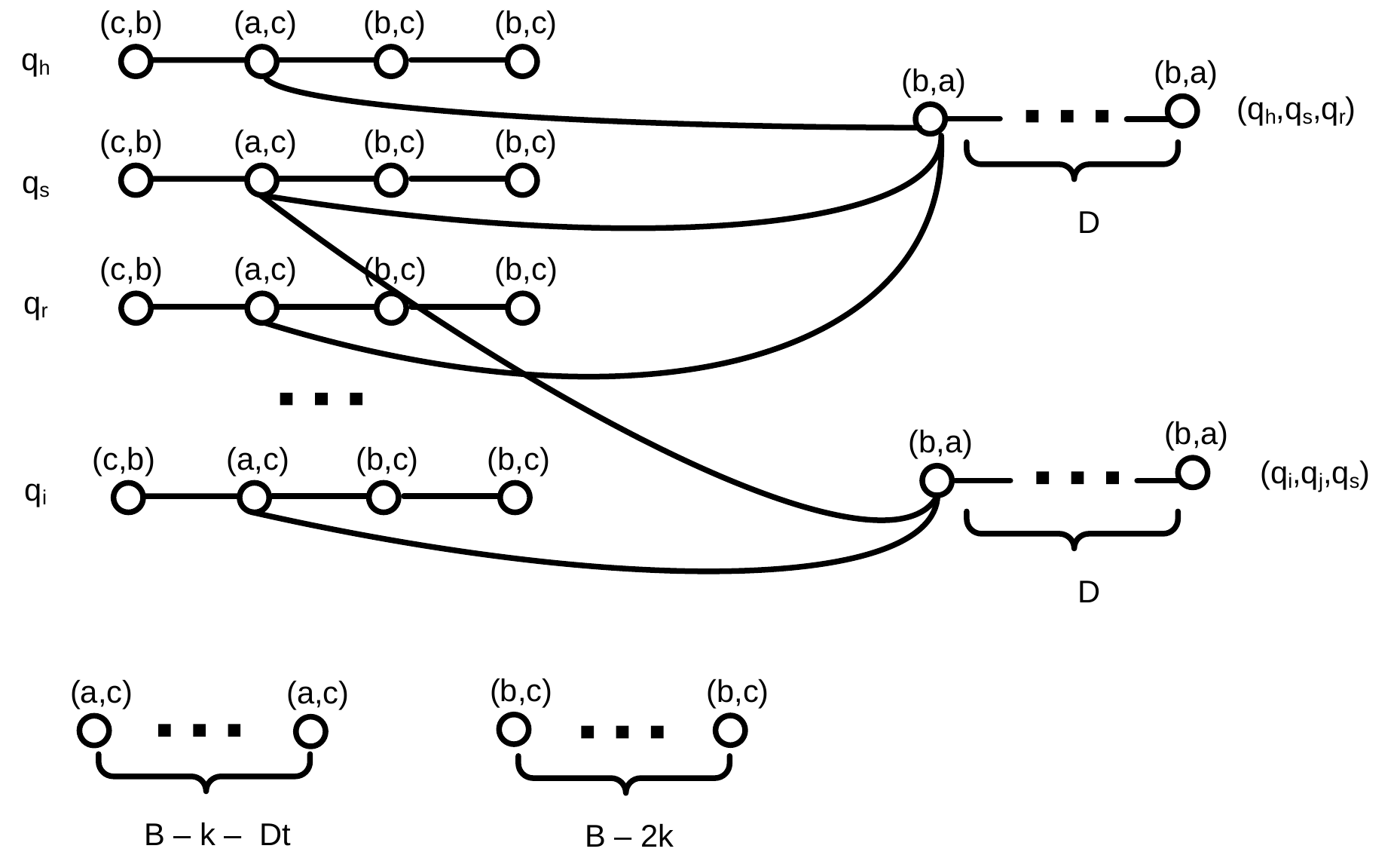}
\caption{\label{fig:proof_upw_bpg_alt_constant} The construction from Theorem~\ref{upw_bpg_alt_constant}}
\end{figure}

\medskip
\noindent\textbf{Select a set of elements.}
If the $(4j+1)$th element-agent in the $j$th path
selects the candidate $a$ then the agents
$(4j+2)$ and $(4j+3)$ will select their
 choice $b$.
Hence, increasing the score
of $a$ by 1 increases the score of $b$ by 2 if we only
consider voters in the $j$th path.
The basic score of  $a$ is $B-k-tD$,
the maximum number of points that $a$ can gain
from set-agents is $Dt$, and the basic
score of $b$ is $B-2k$; hence
at most $k$ element-agents can select $a$.

\medskip
\noindent\textbf{Check a hitting set.}
Suppose exactly $k'$ element-agents selected $a$
and the corresponding $k'$ elements cover $t'$ sets.
The remaining set of element-agents vote for $c$.
Hence, $Dt'$ set-agents
vote for $a$ and the remaining $(t-t')D$ vote for $b$.
Then the maximum score of $a$ is $B  - (k + Dt) +  (k'+t'D)$.
The maximum score of $b$ in this case is $B - 2k+2k' + (t-t')D$.
For $a$ to beat $b$ we need
$B  - (k + Dt) +  (k'+Dt') \geq B - 2k+2k'+ (t-t')D$ or
$2Dt' + k \geq k' + 2Dt$. As $D > t$,
this inequality holds iff $t' \geq t$.
Hence, $k'$ selected elements must form
a hitting set.
As at most $k$
element-agents are allowed to select $a$,
the problem has a solution iff there is a solution
to the hitting set problem.

\medskip
\noindent\textbf{Order construction.}
Let $H$ be a hitting set of size $k$. Then
$J = \{h : q_h \in H\}$ and
$ J^c = \{0,\dots,n-1\} \setminus J$.
First, the agents $\{4j,  \ldots, 4j+3\}$, $j \in J$
vote in the order $4j+1 \prec 4j \prec 4j+2 \prec 4j+3$,
so that each agent  selects his top choice.
Then all set-agents vote in the order
$(4n-1) +  1 \prec (4n-1) +  2 \prec  \ldots \prec (4n-1) + D(t-1) + D$.
As the set $J$
corresponds to the hitting set $H$, all set-agents vote for $a$.
Finally, the agents $\{4j,  \ldots, 4j+3\}$, $j \in J^c$,
vote in the order  $4j \prec 4j+1 \prec 4j+2 \prec 4j+3$,
so that each of these agents selects $c$.\myqed
\end{proof}

\begin{theorem}~\label{unw_bpg_alt_constant}
The unweighted necessary winner is co-NP-complete even if
the number of candidates is constant, the social network graph is
bipartite, and each agent has two preferred candidates.
 \end{theorem}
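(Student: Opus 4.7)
My plan is to prove (i) membership in co-NP, then (ii) co-NP-hardness by reducing from {\sc 3-hitting set} with a light modification of the construction in the proof of Theorem~\ref{upw_bpg_alt_constant}.

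For (i), observe that the complementary problem asks whether there exists a voting order $\pi$ under which some candidate $d \neq c^*$ obtains strictly more votes than $c^*$. A voting order $\pi$ is a short certificate: given $\pi$, the choice function determines every vote in polynomial time, because each agent's vote depends only on her already-voted neighbors. Thus one can simulate the process, compute the final scores, and verify in polynomial time whether some candidate strictly beats $c^*$. Hence the complement is in NP, and the necessary winner problem is in co-NP.

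For (ii), I would re-use verbatim the bipartite social network, candidate set $\{a,b,c\}$, preferences, element-agents, set-agents, and isolated voters from the construction in the proof of Theorem~\ref{upw_bpg_alt_constant}, and then add a single extra isolated agent with preferences $(a,c)$, which raises $a$'s basic score from $B-k-Dt$ to $B-k-Dt+1$. I set $c^* := b$ and claim that the resulting instance has $b$ as a necessary winner iff the {\sc 3-hitting set} instance has no solution. For the forward direction (hitting set exists $\Rightarrow$ $b$ is not a necessary winner), I re-use the voting order exhibited at the end of the proof of Theorem~\ref{upw_bpg_alt_constant}; there $a$ and $b$ each scored $B$, so the extra vote bumps $a$ to $B+1$, strictly above $b$, which therefore fails to be a winner. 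For the reverse direction (no hitting set $\Rightarrow$ $b$ is a necessary winner), I reprise the score inequality from Theorem~\ref{upw_bpg_alt_constant}: in any order where $k'$ element-agents vote $a$ and their elements cover $t'$ sets, $\mathrm{score}(a) \le B - k - Dt + k' + Dt' + 1$ while $\mathrm{score}(b) \ge B - 2k + 2k' + (t-t')D$, so $\mathrm{score}(a) > \mathrm{score}(b)$ forces $k - k' + 1 > 2D(t-t')$ and hence, since $D>t$, both $t'=t$ and $k' \le k$, i.e., a hitting set of size at most $k$ — contrary to hypothesis. Since moreover $\mathrm{score}(c) \le 4n$ (only path-agents can ever vote $c$) while $\mathrm{score}(b) \ge B-2k \gg 4n$, no candidate can strictly beat $b$ in any order.

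The main obstacle will be rigorously justifying the arbitrary-order score inequality in the reverse direction: the proof of Theorem~\ref{upw_bpg_alt_constant} primarily inspects canonical orderings, whereas here I need to argue that the $(k',t')$ parametrization governs scores across every possible voting order. The two structural facts that make this go through are (a) each set-path is eventually monochromatic, cascading either $D$ votes to $a$ or $D$ votes to $b$ depending solely on how its first set-agent votes, and (b) the first set-agent of $S_i$ votes $a$ only if a strict majority of her already-voted friends (which, for any order maximising $\mathrm{score}(a)$, can be taken to be only element-agent friends) chose $a$. Together these pin down the bounds used above. The remaining conditions of the theorem — three candidates, bipartite social graph, and exactly two preferred candidates per voter — are inherited unchanged from the construction of Theorem~\ref{upw_bpg_alt_constant}, and the added isolated agent preserves all of them.
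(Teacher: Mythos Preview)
Your proposal follows the same route as the paper: reuse the bipartite construction of Theorem~\ref{upw_bpg_alt_constant}, take $c^*=b$, note that $c$'s score is bounded by $4n$ and so can never threaten $b$, and conclude that $b$ fails to be a necessary winner exactly when some order lets $a$ outscore $b$, which is then tied back to the existence of a hitting set. The paper's proof consists of precisely these observations, invoking Theorem~\ref{upw_bpg_alt_constant} directly and using the construction \emph{unchanged}.

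The one substantive difference is your extra isolated $(a,c)$ voter. The paper does not add it; it simply asserts that ``$a$ gets more points than $b$'' iff a hitting set exists, without distinguishing the strict inequality needed here from the co-winner condition established in Theorem~\ref{upw_bpg_alt_constant}. Your padding voter makes that distinction disappear and is a tidy way to be rigorous about it. Likewise, your explicit treatment of the reverse direction---the $(k',t')$ bounds valid across \emph{all} orders, justified by the cascade behaviour of the set-paths and the forced $b$-votes of $(4j{+}2),(4j{+}3)$ whenever $(4j{+}1)$ votes $a$---spells out reasoning the paper leaves implicit. So: same approach, with your version supplying more of the details.
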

\begin{proof}
We use the construction from Theorem~\ref{upw_bpg_alt_constant}.
We ask if the candidate $b$ is a necessary winner.
This means $b$ does not lose to any other candidate under any order.
Note that $c$ cannot win the poll under any order
as the maximum possible score of $c$ is $4n$.
Hence, $b$ is a necessary winner iff there is no order
such that $a$ gets more points than $b$. From Theorem~\ref{upw_bpg_alt_constant}
if follows that $a$ gets more points than $b$ iff there exists a solution to the {\sc 3-hitting set} problem.\myqed
\end{proof}

\begin{theorem}~\label{upw_cng_alt_unbounded}
The unweighted possible winner problem is NP-complete
even if the social network graph is a disjoint union of
paths of length at most 1 and each agent has
two preferred candidates.
\end{theorem}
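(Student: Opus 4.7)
The plan is to reduce from $(3^{\leq},3^{\leq})$-SAT. Let $\phi$ be a CNF formula with variables $x_1,\dots,x_n$ each occurring at most three times in total and clauses $C_1,\dots,C_m$ each of size at most three. The reduction will build an unweighted possible winner instance whose social network graph is a disjoint union of single edges and isolated voters, with a polynomial candidate set: $c^*$, two literal candidates $T_i,F_i$ per variable $x_i$, and one adversary $D_j$ per clause $C_j$.

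First I would introduce the non-trivial edges. For every variable $x_i$, a variable edge with $P(u_i)=P(v_i)=\{T_i,F_i\}$ and opposite tops $T_i,F_i$: depending on the order of its two voters, this edge produces $(T_i,T_i)$ or $(F_i,F_i)$. For every positive occurrence of $x_i$ in a clause $C_j$, an occurrence edge with $P(u)=P(v)=\{T_i,D_j\}$ and opposite tops, producing outcomes $(T_i,T_i)$ or $(D_j,D_j)$; symmetrically for negative occurrences using $F_i$ in place of $T_i$. The remaining voters are isolated and set baseline scores via their top preferences. Writing $p_i$ and $q_i$ for the positive and negative occurrence counts of $x_i$, I would choose baselines $B_{c^*}=K$, $B_{T_i}=K-2p_i$, $B_{F_i}=K-2q_i$ and $B_{D_j}=K-2(|C_j|-1)$ for a sufficiently large $K$. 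Under these choices, $T_i\leq K$ becomes $[\text{var edge of }x_i\text{ oriented to }T_i]+a_i^+\leq p_i$, where $a_i^+$ is the number of positive occurrence edges voting $T_i$; analogously for $F_i$; and $D_j\leq K$ is equivalent to at least one of $C_j$'s literal edges voting its literal candidate rather than $D_j$.

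The forward direction proceeds by direct construction: from a satisfying assignment $\sigma$, orient the variable edge of $x_i$ towards $F_i$ if $\sigma(x_i)=\mbox{true}$ and towards $T_i$ otherwise, then orient each literal occurrence edge so that it votes its literal candidate when that literal is true under $\sigma$ and votes $D_j$ otherwise. A direct score count yields $T_i,F_i\leq K$ and $D_j\leq K$, so $c^*$ is a co-winner. The hard part will be the backward direction, because the $T_i\leq K$ and $F_i\leq K$ inequalities are linear and in isolation admit ``mixed'' orientations where some occurrence edges vote their literal candidate while others vote $D_j$ regardless of the variable edge's orientation. I intend to overcome this by (i) padding the construction with one dummy positive and one dummy negative occurrence for every variable (each with its own dummy size-one adversary) so that $p_i,q_i\geq 1$ and no corner cases arise, and (ii) exploiting the conservation $T_i+F_i=B_{T_i}+B_{F_i}+2(p_i+q_i+1)$ induced by the edge outcomes to show that the simultaneous $T_i\leq K$ and $F_i\leq K$ constraints leave only two feasible configurations per variable, each corresponding to a truth value, from which a satisfying assignment can be read off. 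NP membership is immediate since a voting order can be evaluated in polynomial time.
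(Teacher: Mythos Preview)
Your overall strategy---reduce from $(3^{\leq},3^{\leq})$-SAT with a pair of literal candidates per variable, one adversary per clause, and single-edge gadgets---matches the paper's. But the backward direction of your reduction does not go through, and the reason is that your ``conservation'' identity is simply false.

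With your baselines, the score of $T_i$ is $B_{T_i}+2\cdot[\text{var edge voted }T_i]+2a_i^+$ and that of $F_i$ is $B_{F_i}+2\cdot[\text{var edge voted }F_i]+2a_i^-$, where $a_i^+,a_i^-$ count how many positive/negative occurrence edges voted the literal candidate rather than $D_j$. Hence
\[
T_i+F_i=B_{T_i}+B_{F_i}+2+2(a_i^++a_i^-),
\]
which is \emph{not} fixed: it depends on $a_i^++a_i^-$. (Your claimed value $B_{T_i}+B_{F_i}+2(p_i+q_i+1)$ would even equal $2K+2$, contradicting $T_i,F_i\le K$.) Consequently the constraints $T_i\le K$ and $F_i\le K$ do \emph{not} pin the configuration down to two possibilities. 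Concretely, suppose the variable edge is oriented to $T_i$ (so you would read $x_i=\text{false}$); then $T_i\le K$ says only $a_i^+\le p_i-1$. Even after your padding forces $a_i^+\ge 1$, any value $1\le a_i^+\le p_i-1$ is feasible, so some real positive occurrence edge may still vote $T_i$ and thereby ``satisfy'' its clause $C_j$, while your extracted assignment sets $x_i$ false. The dummies do not help here; they only enforce $a_i^\pm\ge 1$, not consistency between the variable edge and the occurrence edges. (There is also a minor glitch in the forward direction: under your stated orientation rule the dummy occurrence edge for a false literal would vote its dummy adversary $D_j$, pushing it to $K+2$; this is fixable by always orienting dummy edges to the literal, but it is a symptom of the same slack problem.)

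The paper's reduction avoids this by making the clause gadget \emph{asymmetric}: for each literal $l$ of a clause $c_j$ there is an edge with agent preferences $(c_j,d)$ and $(l,c_j)$, whose two outcomes are $(c_j,c_j)$ or $(c_j,l)$. Thus a literal gains at most \emph{one} point per occurrence, while the variable edge contributes $0$ or $2$. With a baseline of $3$ per literal and target score $5$, a literal that receives \emph{any} clause point is forced to have received $0$ from the variable edge, which is precisely the definition of being ``selected'' in the assignment. That tightness is what makes the backward direction work, and it is absent in your symmetric $2$-or-$0$ occurrence edges.
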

\begin{proof}

\begin{figure*}
\centering
\includegraphics[width=0.8\textwidth]{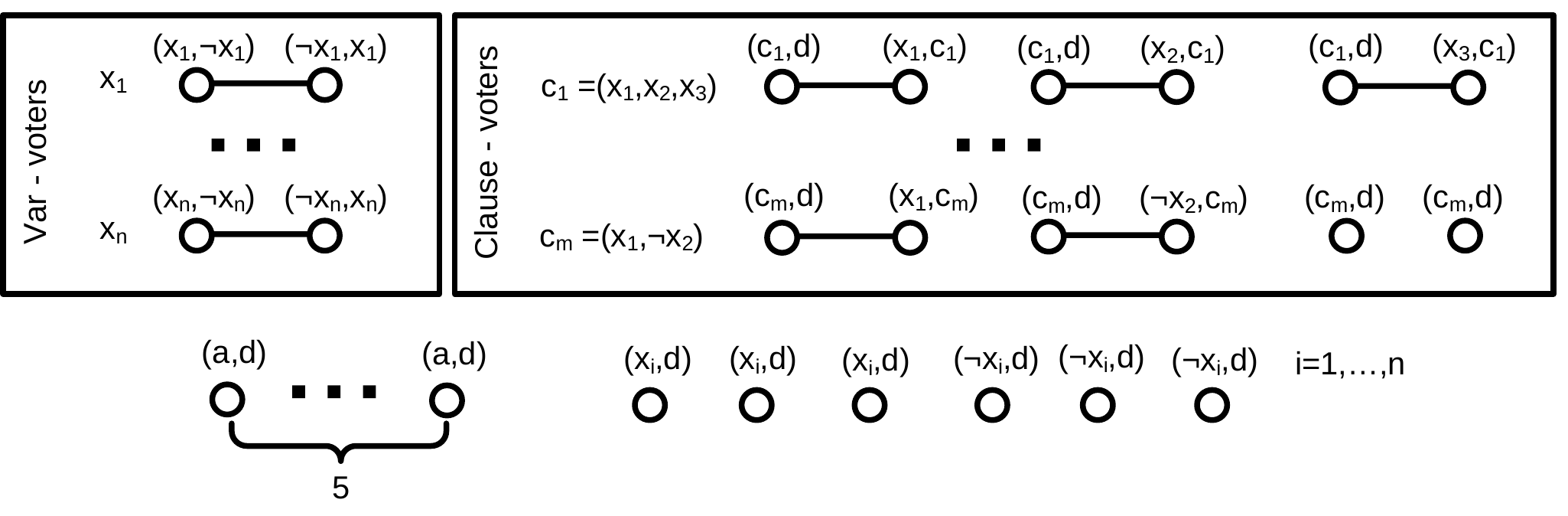}
\caption{\label{fig:proof_upw_cng_alt_unbounded} The construction from Theorem~\ref{upw_cng_alt_unbounded}}
\end{figure*}
We reduce from an instance of the {\sc $(3^{\leq},3^{\leq})$-SAT} problem.
We assume that the formula does not contain unit clauses and pure literals as
those can be removed during a preprocessing step. Therefore,
each variable occurs either twice positively and once negatively or
once positively and twice negatively. Hence, each variable can satisfy
at most 2 clauses.
For each literal, $x_i$ $(\bar{x}_i)$, $i=1,\ldots,n$,  we introduce a candidate labeled with
$x_i$ $(\bar{x}_i)$.
For each clause, $c_j$, $j=1,\ldots,m$, we introduce a candidate labeled with $c_j$.
Finally, we introduce a dummy candidate $d$ and the distinguished candidate $a$.
For each variable $x_i$, $i=1,\ldots,n$, we introduce
two \emph{var-agents}, $\{2i,2i+1\}$, with  preferences $(x_i,\bar{x_i})$ and $(\bar{x}_i,x_i)$,
respectively. Agents $2i$ and $2i+1$ are friends.
For each clause $c_j$, $j=1,\ldots,m$,
of length $3$, $c_j = (l_t,l_s,l_r)$, $j=1,\ldots,m$, $l_h \in \{x_h,\bar{x}_h\}$, $h \in \{t,s,r\}$,
we introduce 6 \emph{clause-agents}, $\{2n + 6j+1, \dots, 2n + 6j+6\}$, that we split into three groups of two agents,
one group for each literal in a clause. Agents in each group are friends. The first group
contains two agents with  preferences $(c_j, d)$ and $(l_t,c_j)$,
the second -- two agents with  preferences $(c_j, d)$ and $(l_s,c_j)$
and  the third -- two agents with  preferences $(c_j, d)$ and $(l_r,c_j)$.
For each clause $c_j$, $j=1,\ldots,m$
of length $2$, $c_j = (l_t,l_s)$, $j=1,\ldots,m$, $l_h \in \{x_h,\bar{x}_h\}$, $h \in \{t,s\}$,
we introduce 6 clause-agents: two groups of two agents for each literal in the clause as described above
and two isolated dummy agents with  preferences $(c_j,d)$.
Finally, we introduce $3$ isolated agents
with  preferences $(l_h,d)$, for each literal $l_h \in \{x_h,\bar{x_h}\}$, $h=1,\ldots,n$
and $5$ isolated agents
with  preferences $(a,d)$.
We ask whether $a$ is a possible winner.
Figure~\ref{fig:proof_upw_cng_alt_unbounded} illustrates the construction.

The basic score of $a$ is $5$, of a literal $l_h$, $l_h \in \{x_h,\bar{x_h}\}$, $h=1,\ldots,n$,
is $3$ and of a clause $c_j$ of size 2, $j \in \{1,\ldots,m\}$, is $2$.

\medskip
\noindent\textbf{Select an assignment.}
Consider a variable $x_i$
and the two corresponding
  var-agents, $2i$ and $2i+1$.
These agents make sure
that either $x_i$ or $\bar{x}_i$ gets two points exclusively.
As the basic score of $x_i$ and $\bar{x}_i$
is $3$, if $x_i$ ($\bar{x}_i$) gets 2 points
from var-agents then it is not allowed to get
any points from clause-agents.
We say that the candidate $x_i$ is selected by an assignment iff
$\bar{x}_i$ gets two points from var-agents
and $\bar{x}_i$ is selected otherwise.
We emphasize that candidates
that are not selected by an assignment are not allowed
to obtain any additional points from clause-agents.

\medskip
\noindent\textbf{Check an assignment.}
Consider a clause $c_j = (x_t,\bar{x}_s,x_r)$.
Due to clause-agents,
the candidate $c_j$ gets at least three points
from the corresponding clause-agents regardless
of the voting order. Moreover, the candidate $c_j$ can get
at most five points from these  clause-agents, otherwise $a$ loses.
Hence, at least one point
has to be given to one of the candidates $\{x_t,\bar{x}_s,x_r\}$.
Hence, at least one of these candidates must be selected to
the assignment. In other words, the corresponding literal
satisfies the clause $c_j$.
The analysis for clauses with two literals is similar.
Note that a candidate in an assignment can gain
at most two points from clause-agents.
In other words, it can satisfy at most two clauses,
which is the maximum number of clauses that a variable
can satisfy in the {\sc $(3^{\leq},3^{\leq})$-SAT} problem
that we consider in the reduction.
Hence, $a$ wins iff there exists a solution of the {\sc $(3^{\leq},3^{\leq})$-SAT} problem.

\medskip
\noindent\textbf{Order construction.}
Let $L$ be the  literals in a satisfying assignment.
For $i=1,\ldots,n$, if $x_i \in L$ then the agent
$2i+1$ votes at position $i$ and, otherwise,
the agent $2i$ votes at position $i$.
This fixes the voting order of $n$ first agents.
Then all clause-agents cast their votes. Note that
as $L$ is a satisfying assignment, none of the
candidates $c_j$, $j=1,\ldots,m$ has more than
$5$ points. The voting order of the remaining agents
is arbitrary.\myqed
\end{proof}


\section{Conclusions}

We have introduced a general model of
social polls in which an agent's vote
is influenced by their friends in their
social graph that have already voted. 
We consider a particular instance of this
model in which influence is very simple:
an agent votes for their most preferred 
candidate unless one of their $k$ most
preferred candidates has already received
a majority of votes from their friends who 
have already voted. We consider how to
compute who can possibly or necessarily 
win such a social poll depending on the
order of the agents yet to vote. These
problems are closely related to a number
of questions regarding control and manipulation
of such votes. Our results show that
the computational complexity of the possible and necessary winner
problems depend on the structure of the underlying social graph
and the number of candidates.
The possible winner problem
is NP-hard to compute in general, even under
strong restrictions on the structure of the social graph.
By comparison, the necessary winner
problem is often computationally easier
to compute. For instance, it
is polynomial to compute if the social
graph has bounded treewidth.

\paragraph{Acknowledgments}
NICTA is funded by
the Australian Government as represented by
the Department of Broadband, Communications and the Digital Economy and
the Australian Research Council.
Serge Gaspers acknowledges support from the Australian Research Council (grant DE120101761).

{\small
\bibliographystyle{plain}
\bibliography{literature}

\begin{thebibliography}{10}

\bibitem{alon12}
N.~Alon, M.~Babaioff, R.~Karidi, R.~Lavi, and M.~Tennenholtz.
\newblock Sequential voting with externalities: herding in social networks.
\newblock In {\em Proceedings of the 13th ACM Conference on Electronic
  Commerce}, EC '12, pages 36--36. ACM, 2012.

\bibitem{Battaglini07}
M.~Battaglini, R.~Morton, and T.~Palfrey.
\newblock {Efficiency, Equity, and Timing of Voting Mechanisms}.
\newblock {\em American Political Science Review}, 101(03):409--424, 2007.

\bibitem{vanBevernFGR12}
Ren{\'e}~van Bevern, Michael~R. Fellows, Serge Gaspers, and Frances~A.
  Rosamond.
\newblock How applying {M}yhill-{N}erode methods to hypergraphs helps mastering
  the art of trellis decoding.
\newblock Technical Report CoRR abs/1211.1299, arXiv, 2012.

\bibitem{BikhchandaniHW92}
Sushil Bikhchandani, David Hirshleifer, and Ivo Welch.
\newblock A theory of fads, fashion, custom, and cultural change as
  informational cascades.
\newblock {\em Journal of Political Economy}, 100(5):992--1026, 1992.

\bibitem{Bodlaender96}
Hans~L. Bodlaender.
\newblock A linear-time algorithm for finding tree-decompositions of small
  treewidth.
\newblock {\em SIAM Journal on Computing}, 25(6):1305--1317, 1996.

\bibitem{BodlaenderFW92}
Hans~L. Bodlaender, Michael~R. Fellows, and Tandy Warnow.
\newblock Two strikes against perfect phylogeny.
\newblock In {\em Proceedings of the 19th International Colloquium on Automata,
  Languages and Programming (ICALP 1992)}, volume 623 of {\em LNCS}, pages
  273--283. Springer, 1992.

\bibitem{csljac2m007}
V.~Conitzer, T.~Sandholm, and J.~Lang.
\newblock When are elections with few candidates hard to manipulate.
\newblock {\em Journal of the Association for Computing Machinery}, 54, 2007.

\bibitem{cwxaaai11}
V.~Conitzer, T.~Walsh, and L.~Xia.
\newblock Dominating manipulations in voting wih partial information.
\newblock In W.~Burgard and D.~Roth, editors, {\em Proceedings of the
  Twenty-Fifth AAAI Conference on Artificial Intelligence (AAAI 2011)}. AAAI
  Press, 2011.

\bibitem{Dekel00}
E.~Dekel and M.~Piccione.
\newblock Sequential voting procedures in symmetric binary elections.
\newblock {\em Journal of Political Economy}, 108(1):pp. 34--55, 2000.

\bibitem{Diestel10}
Reinhard Diestel.
\newblock {\em Graph Theory}, volume 173 of {\em Graduate Texts in
  Mathematics}.
\newblock Springer Verlag, New York, 4th edition, 2010.

\bibitem{DoerrFF12}
Benjamin Doerr, Mahmoud Fouz, and Tobias Friedrich.
\newblock Why rumors spread so quickly in social networks.
\newblock {\em Communications of the ACM}, 55(6):70--75, 2012.

\bibitem{DowneyF99}
Rodney~G. Downey and Michael~R. Fellows.
\newblock {\em Parameterized Complexity}.
\newblock Monographs in Computer Science. Springer, New York, 1999.

\bibitem{fhhcacm10}
P.~Faliszewski, E.~Hemaspaandra, and L.A. Hemaspaandra.
\newblock Using complexity to protect elections.
\newblock {\em Communications of the ACM}, 53(11):74--82, 2010.

\bibitem{fpaimag10}
P.~Faliszewski and A.D. Procaccia.
\newblock {AI}'s war on manipulation: Are we winning?
\newblock {\em AI Magazine}, 31(4):53--64, 2010.

\bibitem{FellowsGR11}
Michael~R. Fellows, Serge Gaspers, and Frances Rosamond.
\newblock Multivariate complexity theory.
\newblock In Edward~K. Blum and Alfred~V. Aho, editors, {\em Computer Science:
  The Hardware, Software and Heart of It}, chapter~13, pages 269--293.
  Springer, 2011.

\bibitem{FellowsL89}
Michael~R. Fellows and Michael~A. Langston.
\newblock An analogue of the {M}yhill-{N}erode theorem and its use in computing
  finite-basis characterizations.
\newblock In {\em Proceedings of the 30th Annual Symposium on Foundations of
  Computer Science (FOCS 1989)}, pages 520--525, 1989.

\bibitem{FlumG06}
J{\"o}rg Flum and Martin Grohe.
\newblock {\em Parameterized Complexity Theory}, volume XIV of {\em Texts in
  Theoretical Computer Science. An EATCS Series}.
\newblock Springer, Berlin, 2006.

\bibitem{GareyJ79}
Michael~R. Garey and David~R. Johnson.
\newblock {\em Computers and Intractability}.
\newblock W. H. Freeman and Company, San Francisco, 1979.

\bibitem{gs1}
A.~Gibbard.
\newblock Manipulation of voting schemes: A general result.
\newblock {\em Econometrica}, 41:587--601, 1973.

\bibitem{HararyP73}
Frank Harary and Edgar~M. Palmer.
\newblock {\em Graphical Enumeration}.
\newblock Academic Press, 1973.

\bibitem{Kloks94}
Ton Kloks.
\newblock {\em Treewidth, Computations and Approximations}, volume 842 of {\em
  Lecture Notes in Computer Science}.
\newblock Springer, 1994.

\bibitem{klijcai2005}
K.~Konczak and J.~Lang.
\newblock Voting procedures with incomplete preferences.
\newblock In {\em Proceedings of the IJCAI-2005 workshop on Advances in
  Preference Handling}, 2005.

\bibitem{Liskovets08}
Valery~A. Liskovets.
\newblock More on counting acyclic digraphs.
\newblock Technical Report 0804.2496 [math.CO], arXiv, 2008.

\bibitem{MaudetPVR12}
Nicolas Maudet, Maria~Silvia Pini, Kristen~Brent Venable, and Francesca Rossi.
\newblock Influence and aggregation of preferences over combinatorial domains.
\newblock In {\em Proceedings of the 11th International Conference on
  Autonomous Agents and Multiagent Systems (AAMAS 2012)}, pages 1313--1314.
  IFAAMAS, 2012.

\bibitem{Niedermeier06}
Rolf Niedermeier.
\newblock {\em Invitation to Fixed-Parameter Algorithms}.
\newblock Oxford Lecture Series in Mathematics and its Applications. Oxford
  University Press, Oxford, 2006.

\bibitem{oeaamas12}
S.~Obraztsova and E.~Elkind.
\newblock Optimal manipulation of voting rules.
\newblock In {\em 11th International Conference on Autonomous Agents and
  Multiagent Systems (AAMAS 2012)}, 2012.

\bibitem{prvwijcai2007}
M.~Pini, F.~Rossi, B.~Venable, and T.~Walsh.
\newblock Incompleteness and incomparability in preference aggregation.
\newblock In Manuela~M. Veloso, editor, {\em Proceedings of the 20th
  International Joint Conference on Artificial Intelligence (IJCAI-2007)},
  pages 1464--1469, 2007.

\bibitem{RobertsonS84}
Neil Robertson and Paul~D. Seymour.
\newblock Graph minors {III}: Planar tree-width.
\newblock {\em Journal of Combinatorial Theory, Series B}, 36(1):49--64, 1984.

\bibitem{Robinson73}
R.~W. Robinson.
\newblock Counting labeled acyclic digraphs.
\newblock In {\em New Directions in the Theory of Graphs}, pages 239--273.
  Academic Press, 1973.

\bibitem{gs2}
M.~Satterthwaite.
\newblock Strategy-proofness and {Arrow's} conditions: Existence and
  correspondence theorems for voting procedures and social welfare functions.
\newblock {\em Journal of Economic Theory}, 10:187--216, 1975.

\bibitem{Tovey84}
Craig~A. Tovey.
\newblock A simplified np-complete satisfiability problem.
\newblock {\em Discrete Applied Mathematics}, 8(1):85 -- 89, 1984.

\bibitem{waaai2007}
T.~Walsh.
\newblock Uncertainty in preference elicitation and aggregation.
\newblock In {\em Proceedings of the 22nd National Conference on AI}, pages
  3--8, 2007.

\bibitem{xcaaai08}
L.~Xia and V.~Conitzer.
\newblock Determining possible and necessary winners under common voting rules
  given partial orders.
\newblock In D.~Fox and C.P. Gomes, editors, {\em Proceedings of the
  Twenty-Third AAAI Conference on Artificial Intelligence (AAAI 2008)}, pages
  196--201. AAAI Press, 2008.

\end{thebibliography}
}

\end{document}